\def\lncs{0}
\newtheorem{theorem}{Theorem}[section]
\newtheorem{definition}[theorem]{Definition}
\newtheorem{lemma}[theorem]{Lemma}
\newtheorem{claim}[theorem]{Claim}
\newtheorem{corollary}[theorem]{Corollary}
\newcommand{\eylon}[1]{\begin{center} \framebox{ \parbox{ 13cm }
{\textcolor[rgb]{.8,0,0}{{\bf Eylon:} #1}}} \end{center}}
\newcommand{\omri}[1]{\begin{center} \framebox{ \parbox{ 13cm }
{\textcolor[rgb]{.8,0,0}{{\bf Omri:} #1}}} \end{center}}
\newcommand{\E}[1]{{\mathbb E}\! \left[ {#1} \right]}
\newcommand{\Bin}{\textsf{Bin}}
\newcommand{\N}{{\mathbb{N}}}
\newcommand{\Var}{\mathsf{Var}}
\newcommand{\Sampler}{\mathsf{Sampler}}
\newcommand{\Adversary}{\mathsf{Adversary}}
\newcommand{\UniformSample}{\mathsf{BernoulliSample}}
\newcommand{\ReservoirSample}{\mathsf{ReservoirSample}}
\newcommand{\AdaptiveGame}{\mathsf{AdaptiveGame}}
\newcommand{\ContinuousAdaptiveGame}{\mathsf{ContinuousAdaptiveGame}}
\newcommand{\ith}[1]{{#1}\textsuperscript{th}}
\newcommand{\ignore}[1]{}
\newenvironment{boxfig}[2]{\begin{figure}[#1]\fbox{\begin{minipage}{\linewidth}
				\vspace{0.2em}
				\makebox[0.025\linewidth]{}
				\begin{minipage}{0.95\linewidth}
					{{
							#2 }}
				\end{minipage}
				\vspace{0.2em}
	\end{minipage}}}{\end{figure}}
\newcommand{\pprotocol}[4]{
	\begin{boxfig}{h!}{
			\begin{center}
				\textbf{#1}
			\end{center}
			#4
			\vspace{0.2em} } \caption{\label{#3} #2}
	\end{boxfig}
}
\newcommand{\protocol}[4]{
	\pprotocol{#1}{#2}{#3}{#4} }
\title{The Adversarial Robustness of Sampling}
\author{Omri Ben-Eliezer\thanks{Blavatnik School of
		Computer Science, Tel Aviv University, Tel Aviv 69978, Israel.}
		\and Eylon Yogev\thanks{Department of Computer Science, Technion, 
		Haifa, Israel. Supported by the European Union's Horizon 2020 research 
		and innovation program under grant agreement no.\ 742754, and by a 
		grant from the Israel Science Foundation (no.\ 950/16).}}
\date{}
\begin{document}
\maketitle
\thispagestyle{empty}

\begin{abstract}

Random sampling is a fundamental primitive in modern algorithms, statistics, and machine learning, used as a generic method to obtain a small yet ``representative'' subset of the data. In this work, we investigate the robustness of sampling against adaptive adversarial attacks in a streaming setting: An adversary sends a stream of elements from a universe $U$ to a sampling algorithm (e.g., Bernoulli sampling or 
reservoir sampling), with the goal of making the sample ``very unrepresentative'' of the underlying data stream. The adversary is fully adaptive in the sense that it knows the exact content of the sample at any given point along the stream, and can choose which element to send next accordingly, in an online manner.

Well-known results in the static setting indicate that if the full stream is chosen in advance (non-adaptively), then a random sample of size $\Omega(d / \varepsilon^2)$ is an $\varepsilon$-approximation of the full data with good probability, where $d$ is the VC-dimension of the underlying set system $(U, \mathcal{R})$. 
Does this sample size suffice for robustness against an adaptive adversary? 
The simplistic answer is \emph{negative}: We demonstrate a set system where a constant sample size (corresponding to a VC-dimension of $1$) suffices in the static setting, yet an adaptive adversary can make the sample very unrepresentative, as long as the sample size is (strongly) sublinear in the stream length, using a simple and easy-to-implement attack. 

However, this attack is ``theoretical only'', requiring the set system size to (essentially) be exponential in the stream length. This is not a coincidence: We show that in order to make the sampling algorithm robust against adaptive adversaries, the modification required is solely to replace the VC-dimension term $d$ in the sample size with the cardinality term $\log |\mathcal{R}|$.
That is, the Bernoulli and reservoir sampling algorithms with sample size $\Omega(\log |\mathcal{R}|/\varepsilon^2)$ output a representative sample of the stream with good probability, even in the presence of an adaptive adversary. This nearly matches the bound imposed by the attack.

\end{abstract}

\newpage

\setcounter{page}{1}

\section{Introduction}

Random sampling is a simple, generic, and universal method to deal with massive 
amounts of data across all scientific disciplines. It has wide-ranging 
applications in  statistics, 
databases, 
networking, data mining, approximation 
algorithms, randomized algorithms, machine learning, and other fields (see 
e.g.,~\cite{CranorJSS03,JohnsonMR05,JermainePA04,CohenDKLT11,CormodeG05,CormodeMY11}
and \cite[Chapter 4]{Chazelle01}).
Perhaps the central reason for its wide applicability is the fact that it 
(provably, and with high probability) suffices to take only a small number of 
random samples from a large dataset in order to ``represent'' the dataset 
truthfully (the precise geometric meaning is explained later).
Thus, instead of performing costly and sometimes infeasible computations on 
the full dataset, one can sample a small 
yet ``representative'' subset of a data, perform the required analysis on 
this small subset, and extrapolate (approximate) conclusions from the 
small subset to the entire dataset.

The analysis of sampling algorithms has mostly been studied in the 
\emph{non-adaptive} (or \emph{static}) setting, where the data is fixed in advance, and then the sampling procedure runs on the fixed data. However, it is not always realistic to assume that the data does not change during the sampling procedure, as described in~\cite{MironovNS11,GilbertHRSW12,GilbertHSWW12,HardtW13,NaorY15}. 
In this work, we study the robustness of sampling in an \emph{adaptive}
adversarial environment.

\paragraph{\bf The adversarial environment.}
In high-level, the model is a two-player game between a randomized streaming 
algorithm, called $\Sampler$, and an adaptive player, $\Adversary$. 
In each round,
\begin{enumerate}
\item $\Adversary$ first submits an element to $\Sampler$. The choice of the element can depend, possibly in a probabilistic manner, on all elements submitted by $\Adversary$ up to this point, as well as all information that $\Adversary$ observed from $\Sampler$ up to this point.
\item Next, $\Sampler$ probabilistically updates its internal state, i.e., the 
sample that it currently maintains. An update step usually involves an 
insertion of the newly received element to the sample with some probability, 
and sometimes deletion of old elements from the sample.

\item Finally, $\Adversary$ is allowed to observe the 
current (updated) state of $\Sampler$, before proceeding to the next round.
\end{enumerate}

$\Adversary$'s goal is to make the sample as {\em unrepresentative} as 
possible, causing $\Sampler$ to come with false conclusions about the data 
stream. The 
game is formally described in Section \ref{subsec:adversarial_model_rules}.

Adversarial scenarios are common and arise in different settings. An adversary 
uses {\em adversarial examples} to fool a trained machine learning 
model~\cite{SzegedyZSBEGF13,MontasserHS19}; In the field of online learning \cite{Hazan2013}, adversaries are typically adaptive \cite{ShamirSzlak2017, Lykouris2018}. An online store suggests 
recommended items based on a sample of 
previous purchases, which in turn influences future sales 
\cite{Shalev-Shwartz12,GilbertHRSW12}. A 
network device routes traffic according to statistics pulled from a sampled 
substream of 
packets \cite{DuffieldLT05}, and an adversary that 
observes the network's traffic learns the device's routing choices might cause 
a denial-of-service attack by generating a small amount of adversarial traffic 
\cite{NaorY15}.
A high-frequency stock trading algorithm monitors a stream of stock orders 
places 
buy/sell requires based on statistics drawn from samples; A competitor might 
fool the sampling algorithm by observing its requests and modifying future 
stock orders accordingly.
An autonomous vehicle receives physical signals from its immediate environment (which might be adversarial \cite{Sitawarin2018}) and has to decide on a suitable course of action.

Even when there is no apparent adversary, the adaptive perspective is sometimes 
natural and required. 
For instance, adaptive data analysis \cite{Dwork2015Science, Woodworth18} aims 
to understand the challenges arising when data arrives online, such as data 
reuse, the implicit bias ``collected'' over time in scientific discovery, and 
the evolution of statistical hypotheses over time. In graph algorithms, 
\cite{ChuGPSSW18} observed that an adversarial analysis of dynamic spanners
would yield a simpler (and quantitively  better) alternative to their work.
 

In view of the importance of robustness against adaptive adversaries, and the fact that random sampling is very widely used in practice (including in streaming settings), we ask the following.

\begin{quotation}
\begin{center}
{\em 
Are sampling algorithms robust against adaptive adversaries?
}
\end{center}
\end{quotation}

\paragraph{\bf Bernoulli and reservoir sampling.}
We mainly focus on two of the most basic and well-known sampling algorithms: 
Bernoulli sampling 
and reservoir sampling. 
The Bernoulli sampling algorithm with parameter $p \in [0,1]$ runs as follows: 
whenever it receives a stream element $x_i$, the algorithm stores the element 
with probability $p$.
For a stream of length $n$ the sample size is expected to 
be $np$; and furthermore, it is well-concentrated around this value.
We denote this algorithm by $\UniformSample$. 

The classical reservoir sampling algorithm \cite{Vitter85} (see also 
\cite[Section 3.4.2]{Knuth1997} and a formal description in 
\Cref{sec:model}) with parameter $k \in [n]$ maintains a uniform 
sample of fixed size $k$, acting as follows. The first $k$ elements it 
receives, $x_1, 
\ldots, x_k$, are simply added to the memory with probability one. When the 
algorithm receives its $\ith{i}$ element $x_i$, where $i > k$, it stores it 
with probability $k/i$, by overriding a uniformly random element 
from the memory (so the 
memory size is kept fixed to $k$).
We henceforth denote this algorithm by $\ReservoirSample$.

\paragraph{\bf Attacking sampling algorithms.}
To answer the question above of whether sampling algorithms are robust against 
adversarially chosen streams, we must first define a notion of a 
representative sample, as several notions might be appropriate. However, we 
begin the discussion with an example showing how to attack the Bernoulli (and 
reservoir) sampling algorithm with respect to merely any definition of 
``representative''.

Consider a setting where the stream 
consists of $n$ points $x_1, \ldots, x_n$ in the 
one-dimensional range of real numbers $[0,1]$. $\UniformSample$ receives 
these points and samples each one 
independently with probability $p<1$. 
One can observe that, in the static setting and for sufficiently large $p$, the 
sampled set will be a good representation of the entire $n$ points for various 
definitions of the term ``representation''. For example, the median of 
the stream will be $\varepsilon$-close\footnote{The term ``close'' here 
means that the median of the sampled set 
will be an element whose order among the elements of the full stream, when the elements are sorted by value from smallest to largest, is within the range $(1 \pm \epsilon)n/2$, with high probability where the 
parameter $\epsilon$ depends on the probability $p$.} to the median of 
the sampled elements with high probability, as long as 
$p=\frac{c}{\varepsilon^2n}$ for some constant $c > 0$ (this also holds for 
any other quantile).

Consider the following {\em adaptive} adversary which will demonstrate 
the difference of the adaptive setting. $\Adversary$ keeps a ``working range'' at any point during the game, starting with the full range $[0,1]$. In the first round, $\Adversary$ chooses the number $x_1=0.5$ as the first element in the stream. If $x_1$ is 
sampled, then $\Adversary$ moves to the range $[0.5,1]$, and otherwise, to the 
range $[0,0.5]$. Next, $\Adversary$ submits $x_2$ as the middle of the current range. This 
continues for $n$ steps; Formally, $\Adversary$'s strategy is as 
follows. Set $a_1=0$ and $b_1=1$. In round $i$, where $i$ runs from $1$ to $n$, $\Adversary$ submits $x_i=\frac{a_i+b_i}{2}$ to $\UniformSample$; If 
$x_i$ is sampled then $\Adversary$ sets $a_{i+1}=x_i, b_{i+1} = b_i$, and otherwise, it sets $a_{i+1} = a_i, b_{i+1}=x_i$. The final stream is 
$x_1,\ldots,x_n$.

Note that at any point throughout  the process, $\Adversary$ always 
submits an element that is
larger than all elements in the current sampled set, and also smaller 
than all 
the non-sampled elements of the stream. Therefore, the end result is that 
after this 
process is over, with probability 1, the $k$ sampled elements 
are precisely the {\em smallest} $k$ elements in the stream. Of course, the median 
of the sampled set is far from the median of the stream as such a 
subset is very 
{\em unrepresentative} of the data. Actually, one might consider it as 
the ``most unrepresentative'' subset of the data.

The exact same attack on $\UniformSample$ works almost as effectively against 
$\ReservoirSample$. In this case, the attack will cause all of the $k$ sampled 
elements at the end of the process to lie among the first $O(k \ln n)$ 
elements with high probability. For more details, see Section \ref{sec:attack}.

\paragraph{\bf The good news.}
This attack joins a line of attacks in the adversarial model.
Lipton and Naughton \cite{LiptonN93} showed that an adversary that can measure 
the
time of operations in a dictionary can use this information to increase the 
probability  of a collision and as a result, significantly decrease the 
performance of the hashtable.
Hardt and 
Woodruff \cite{HardtW13} showed that linear sketches are inherently 
non-robust and cannot be used to compute the Euclidean norm of its input (where in 
the static setting they are used mainly for this reason). Naor and Yogev 
\cite{NaorY15} showed that Bloom filters are susceptible to attacks by an 
adaptive stream of queries if the adversary is computationally unbounded 
and they also constructed a robust Bloom filter against computationally 
bounded adversaries.

In our case, we note that the given attack might categorize it as 
``theoretical'' only. In practice, it is unrealistic to assume that the universe from which $\Adversary$ can pick elements is an infinite set; how would the attack look, then, if the universe is the discrete set $[N] = \{1, \ldots, N\}$?
$\Adversary$ 
splits the range $[0,1]$ to half for $n$ times, meaning that the precision of 
the elements required is exponential; The analogous attack in the discrete setting requires $N$ to be exponentially large with respect to the stream size $n$. 
Such a universe size is large and ``unrealistic'': for $\Sampler$ to memorize even a single element  requires memory size that is linear in $n$, whilst sampling and streaming algorithms usually aim to use an amount sublinear in $n$ of memory.

Thus, the question 
remains whether there exist attacks that can be performed on elements using 
substantially less 
precision, that is, on a significantly smaller size of discrete universe. In 
this work, we bring good news to both the Bernoulli and reservoir sampling 
algorithms by answering this question \emph{negatively}. We show that both sampling 
algorithms, with the right parameters, will output a representative sample with good probability regardless of $\Adversary$'s strategy, thus exhibiting robustness for these algorithms in adversarial settings.

We note that any {\em deterministic} algorithm that works in the static setting 
is inherently robust in the adversarial adaptive setting as well. However, in many cases, deterministic 
algorithms with small memory simply do not exist, or they are complicated and
tailored for a specific task. Here, we enjoy the simplicity of a generic 
randomized sampling algorithm combined with the robust guarantees of our 
framework.

\paragraph{\bf What is a representative sample?}
Perhaps the most standard and well-known notion of being representative is that 
of an $\varepsilon$-approximation, first suggested by Vapnik and Chervonenkis~\cite{VC1971} (see also~\cite{MustafaVaradarajan:2017}), which originated as a natural notion of discrepancy~\cite{Chazelle01} in the geometric 
literature. It is closely related to the celebrated notion of VC-dimension \cite{VC1971, Sauer1972, Shelah1972}, and captures many quantitative properties 
that are desired in a random subset.
Let $X=(x_1,\ldots,x_n)$ be a sequence of elements from the universe $U$ (repetitions are allowed) and let $R
\subseteq U$. The \emph{density} of $R$ in $X$ is the 
fraction of elements in $X$ that are also in $R$ (i.e., $d_R(X) = \Pr_{i \in 
[n]}[x_i \in R]$).

A \emph{set system} is simply a pair $(U, \mathcal{R})$ where $\mathcal{R} \subseteq 2^U$ is a collection of subsets.
A non-empty subsequence $S$ of $X$ is an \emph{$\varepsilon$-approximation} of $X$ with 
respect to the set system $(U, \mathcal{R})$ if it preserves densities (up to 
an $\varepsilon$ factor) for all 
subsets $R \in \mathcal{R}$.

\begin{definition}[$\varepsilon$-approximation]
We say that a (non-empty) sample $S$ is an $\varepsilon$-{\em approximation} of $X$ with 
respect to $\mathcal{R}$ if for any subset $R \in \mathcal{R}$ it holds that
$
\left| d_{R}(X) - d_{R}(S) \right| \le \varepsilon.
$
\end{definition}
If the universe $U$ is well-ordered, it is natural to take $\mathcal{R}$ as the 
collection of all consecutive intervals in $U$, that is, $\mathcal{R} = \{[a,b] 
: a \leq b \in U\}$ (including all singletons $[a,a]$). 
With this set system in hand, $\varepsilon$-approximation is a natural form of ``good 
representation'' in the streaming setting, pointed out by its deep connection to 
multiple classical problems in the streaming literature, like approximate median, 
and more generally, quantile estimation \cite{Manku1999, Greenwald2001, 
Wang2013, Greenwald2016,  KarninLangLiberty2016} and range searching 
\cite{BagchiCEG07}.
In particular, if $S$ is an 
$\varepsilon$-approximation of $X$ w.r.t. $(U, \mathcal{R})$, then any $q$-quantile 
of $S$ is $\varepsilon$-close to the $q$-quantile of $X$; this holds 
simultaneously for all quantiles (see \Cref{sec:applications}).


\subsection{Our Results}
Fix a set system $(U, \mathcal{R})$ over the universe $U$.
A sampling algorithm is called \emph{$(\varepsilon, \delta)$-robust} if for 
any (even computationally unbounded) strategy of $\Adversary$, the output sample $S$ 
is an $\varepsilon$-approximation of the whole stream $X$ with respect to $(U, \mathcal{R})$, with probability at least 
$1-\delta$.

Our main result is an upper bound (``good news'') 
on the 
$(\varepsilon, \delta)$-robustness of Bernoulli and reservoir sampling, later 
to be complemented them with near-matching lower bounds. 
\begin{theorem}
\label{thm:UB}
For any $0 < \varepsilon, \delta < 1$, 
set system $(U, \mathcal{R})$, and stream length $n$, the following holds.
\begin{itemize}
	\item $\UniformSample$ with parameter 
	$p \geq 10\cdot\frac{\ln|\mathcal{R}| + \ln 
		(4/\delta)}{\varepsilon^2n}$ is 
	$(\varepsilon,\delta)$-robust.
	\item $\ReservoirSample$ with parameter \ 
	$k \geq 2\cdot\frac{\ln|\mathcal{R}| + \ln 
		(2/\delta)}{\varepsilon^2}$ is 
	$(\varepsilon,\delta)$-robust.
\end{itemize}
\end{theorem}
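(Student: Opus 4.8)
The plan is to fix a single subset $R \in \mathcal{R}$ and to prove that, \emph{no matter which adaptive strategy $\Adversary$ employs}, the sampled density $d_R(S)$ lies within $\varepsilon$ of $d_R(X)$ except with probability at most $\delta/|\mathcal{R}|$; a union bound over the $|\mathcal{R}|$ choices of $R$ then yields the theorem. The essential point is that the union bound is over $\mathcal{R}$ only, and \emph{not} over adversary strategies — this is exactly why the relevant parameter is $\ln|\mathcal{R}|$ rather than the VC-dimension. (In the static setting one may first invoke Sauer--Shelah to pass to the $\sim n^d$ ``essentially distinct'' traces of $\mathcal{R}$ on the stream, but against an adaptive adversary, which can steer the stream toward whichever $R$ is about to become unrepresentative, no such reduction is available.) The structural fact that makes the per-$R$ statement provable is that at step $i$ the element $x_i$ is committed by $\Adversary$ \emph{before} $\Sampler$ flips its round-$i$ coins; hence, conditioned on the whole history up to and including $x_i$, the indicator $\mathbf{1}[x_i \in R]$ is already determined while $\Sampler$'s fresh randomness still has its nominal distribution. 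This lets us express the error as the final value of a martingale with conditionally mean-zero increments, and apply a Bernstein-type martingale concentration inequality (e.g.\ Freedman's inequality), whose strength comes from the \emph{small conditional variance} rather than from the crude range of the increments.

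For $\UniformSample$, let $b_1,\dots,b_n\in\{0,1\}$ be the independent $\Bin(1,p)$ sampling coins, let $\mathcal{G}_i$ be the $\sigma$-algebra generated by all history up to and including the choice of $x_i$ but before $b_i$, and set $M_j=\sum_{i\le j}\mathbf{1}[x_i\in R]\,(b_i-p)$. Since $\mathbf{1}[x_i\in R]$ is $\mathcal{G}_i$-measurable and $b_i$ is independent of $\mathcal{G}_i$ with mean $p$, $(M_j)$ is a martingale; its endpoint is $M_n=c_R-p\,a_R$, where $a_R=|\{i:x_i\in R\}|$ and $c_R=|S\cap R|=\sum_i b_i\mathbf{1}[x_i\in R]$. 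Each increment has conditional variance $\mathbf{1}[x_i\in R]\,p(1-p)\le p$ and magnitude at most $1$, so the total conditional variance is at most $pn$, and Freedman's inequality gives $\pr{|c_R-p\,a_R|>\varepsilon p n/4}\le \delta/(2|\mathcal{R}|)$ once $p\ge\Omega\big((\ln|\mathcal{R}|+\ln(1/\delta))/(\varepsilon^2 n)\big)$. Separately, the sample size $m=\sum_i b_i\sim\Bin(n,p)$ does not depend on $\Adversary$ at all, so a standard Chernoff bound gives $|m-pn|\le\varepsilon pn/4$ (hence also $m\ge\tfrac34 pn$) except with probability $\delta/2$. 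Finally, using the identity $\tfrac{c_R}{m}-\tfrac{a_R}{n}=\tfrac{n(c_R-p a_R)-a_R(m-pn)}{mn}$ together with $a_R\le n$ and $m\ge\tfrac34 pn$, on the intersection of these good events $|d_R(S)-d_R(X)|=|\tfrac{c_R}{m}-\tfrac{a_R}{n}|\le\tfrac23\varepsilon<\varepsilon$ for every $R$ simultaneously; a union bound over $\mathcal{R}$ on the first event, plus the single Chernoff event, completes this case.

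For $\ReservoirSample$ (we may assume $k\le n$, as otherwise the sample is the whole stream) the correct martingale is less obvious: a new element perturbs the sample by only $O(1)$, but that perturbation is measured against a reservoir of size $k$ out of $j$ elements, so the natural error $e_j:=|S_j\cap R|-\tfrac{k}{j}\,a_j$ obeys $\mathbb{E}[e_{j+1}\mid\mathcal{F}_j]=\tfrac{j}{j+1}\,e_j$ — a contraction, not a martingale. Rescaling by $j$ repairs this: $g_j:=j\,e_j=j\,|S_j\cap R|-k\,a_j$ is a genuine martingale for $j\ge k$, with $g_k=0$ since $S_k=\{x_1,\dots,x_k\}$. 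Splitting on the two relevant events at step $j+1$ (whether $x_{j+1}$ enters the reservoir, and whether the evicted element lies in $R$) shows $\Var(|S_{j+1}\cap R|\mid\mathcal{F}_j)\le\tfrac{k}{j+1}$, so the conditional variance of $g_{j+1}-g_j$ is at most $(j+1)k$, the total conditional variance is at most $kn^2/2$, and each increment is bounded in absolute value by $O(n)$. Freedman's inequality then yields $\pr{|g_n|>\varepsilon k n}\le 2\exp(-\Omega(\varepsilon^2 k))\le\delta/|\mathcal{R}|$ as soon as $k\ge\Omega((\ln|\mathcal{R}|+\ln(1/\delta))/\varepsilon^2)$; and since the reservoir has size exactly $k$, $d_R(S)-d_R(X)=\tfrac{|S_n\cap R|}{k}-\tfrac{a_n}{n}=\tfrac{g_n}{kn}$, so the event $|g_n|\le\varepsilon k n$ \emph{is} the event $|d_R(S)-d_R(X)|\le\varepsilon$. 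A union bound over $\mathcal{R}$ finishes the proof.

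The real work is in the $\ReservoirSample$ case: spotting the $j$-weighted martingale $g_j$ and, more importantly, verifying that its predictable quadratic variation is only $O(kn^2)$ even though single increments may be as large as $\Theta(n)$ — it is precisely this gap between variance and range that makes a Bernstein/Freedman bound (rather than a useless Azuma bound) deliver the claimed $k=\Theta(\ln|\mathcal{R}|/\varepsilon^2)$. What remains is bookkeeping: chasing the constants to reach exactly $10$ and $2$ as in the statement (which needs sharp forms of the multiplicative Chernoff and the martingale Bernstein inequalities, instead of the lossy $\varepsilon/4$-splitting used above), and checking the elementary algebraic identities tying the martingale endpoints to $|d_R(S)-d_R(X)|$.
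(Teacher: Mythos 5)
Your proposal is correct and is essentially the paper's own argument: fix $R$, build the per-$R$ martingale (your $M_j$ and $g_j$ are, up to the constant scalings $1/np$ and $1/k$, exactly the paper's $Z^R_j$), control the conditional variance so that a Bernstein/Freedman-type inequality beats Azuma, handle the Bernoulli sample-size fluctuation by a separate Chernoff bound, and union-bound over $\mathcal{R}$. The only differences are cosmetic bookkeeping (your normalization of the martingales, and your algebraic identity for $c_R/m - a_R/n$ versus the paper's two-term triangle-inequality split).
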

The proof appears in Section \ref{sec:main-technical}.
As the total number of elements sampled by $\UniformSample$ is well-concentrated 
around $np$, the above theorem implies that a sample of total size (at least) 
$\Theta(\frac{\ln|\mathcal{R}| + \ln 
	\frac{1}{\delta}}{\varepsilon^2})$, obtained by any of the algorithms, 
	$\UniformSample$ or 
	$\ReservoirSample$, is an $\varepsilon$-approximation with probability 
	$1-\delta$.

This should be compared with the static setting, where the 
same result is known as long as $p\ge c \cdot \frac{d + \ln 
\frac{1}{\delta}}{\varepsilon^2n}$ for $\UniformSample$, and $k \ge c \cdot 
\frac{d + \ln \frac{1}{\delta}}{\varepsilon^2}$ for $\ReservoirSample$, where 
$d$ is the VC-dimension of $(U, \mathcal{R})$ and $c>0$ is a constant 
\cite{VC1971,talagrand1994sharper,LiLS01} (see also 
\cite{MustafaVaradarajan:2017}).

As you can see, to make the static sampling algorithm robust in the adaptive setting 
one solely needs to modify the sample size by replacing the VC-dimension term 
$d$ with the cardinality 
dimension $\ln|\mathcal{R}|$ (and update the multiplicative constant). Below, 
in our lower bounds, we 
show that this increase in the sample size is inherent, and not a byproduct of 
our 
analysis.

\paragraph{\bf Lower Bounds.}
We next show that being adaptively robust comes at a 
price. That is, the dependence on the cardinality dimension, as opposed to the 
VC dimension, is necessary. By an improved version of the attack described in 
the introduction, we show the following:
\begin{theorem}\label{thm:attack}
There exists a constant $c>0$ and a set system $(U,\mathcal{R})$ with 
VC-dimension 1, where such that for 
any $0 < \varepsilon, \delta < 1/2$:
\begin{enumerate}
	\item The $\UniformSample$ algorithm with parameter 
	$p < c \cdot \frac{\ln |\mathcal{R}|}{n \ln n}$ is \textbf{not} 
	$(\varepsilon,\delta)$-robust.
	\item The $\ReservoirSample$ algorithm with parameter 
	$k <c \cdot \frac{\ln |\mathcal{R}|}{\ln n}$ is \textbf{not} 
	$(\varepsilon,\delta)$-robust.
\end{enumerate}
Moreover, for any $n^{6 \ln n} \leq N \leq 2^{n/2}$, there exists $(U, 
\mathcal{R})$ as above where $|\mathcal{R}|=|U|=N$.
\end{theorem}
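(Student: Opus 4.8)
The plan is to prove Theorem~\ref{thm:attack} by exhibiting an explicit set system together with a precision–recycling refinement of the binary-search attack from the introduction. I would take $U=[N]$ and let $\mathcal R=\{\,[t]:=\{1,\dots,t\}\ :\ t\in[N]\,\}$ be the family of all prefixes of $[N]$, so that $|\mathcal R|=|U|=N$ and the VC-dimension is exactly $1$ (every singleton is shattered, but for $a<b$ the set $\{b\}$ is cut out by no prefix, so no pair is). The reason this family is convenient is that a subsequence $S$ is an $\varepsilon$-approximation of $X$ w.r.t.\ $\mathcal R$ exactly when the empirical CDFs of $S$ and of $X$ are within $\varepsilon$ in sup-norm; hence to defeat $(\varepsilon,\delta)$-robustness it suffices to design an adversary that, with probability $>\delta$, forces some threshold $t$ with $|\,d_{[t]}(X)-d_{[t]}(S)\,|>\varepsilon$. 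For the ``moreover'' part I would always let $\mathcal R$ consist of all $N$ prefixes but run the actual attack inside a sub-universe $[N_0]\subseteq[N]$ with $N_0=2^{\Theta(k\log n)}$ (resp.\ $2^{\Theta(pn\log n)}$ for $\UniformSample$); the hypothesis $k<c\ln N/\ln n$ (resp.\ the analogous bound on $p$) is exactly what makes $N_0\le N$, while $N\ge n^{6\ln n}$ leaves ample room even when $k$ (resp.\ $pn$) is only a constant. (When $p$ or $k$ is so small that the sample is empty with constant probability, the adversary wins by definition, so we may assume the sample size is at least a large constant.)

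Now the attack. The introduction's attack runs a single depth-$n$ binary search over $[0,1]$ and therefore needs precision $2^{-n}$; instead, the adversary here spreads the $\log N$ available bits of precision over all $n$ rounds. It maintains a working interval $[a,b]\subseteq[1,N]$ whose lower endpoint $a$ plays the role of a ``ceiling'', with the invariant that every currently-sampled element is $\le a$. The stream is split into $\log N$ ``levels''; at a level the adversary submits the midpoint $m=\lfloor(a+b)/2\rfloor$ roughly $L\approx n/\log N$ times in a row, then makes a \emph{right move} ($a\leftarrow m$) if $m$ was sampled — for $\ReservoirSample$, \emph{accepted} — at least once during those $L$ repetitions, and a \emph{left move} ($b\leftarrow m$) otherwise. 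After $\log N$ levels the interval has collapsed to a point and exactly $n$ elements have been sent. The structural facts, inherited from the introduction's attack, are: (i) the number of right-move levels is about the sample size, which by hypothesis is $O(\log N/\log n)\ll\log N$, so the vast majority of levels move left; (ii) the ceiling only rises at right-move levels, every sampled value is at most the final ceiling, and all left-move midpoints submitted after a given right move lie strictly above that level's ceiling; (iii) the submitted midpoints cover $[1,N]$ geometrically, so a constant fraction of the stream lies above any ceiling value reached by the $o(\log N)$ right moves. For $\ReservoirSample$ one must also check that the occasional acceptances of a large midpoint get evicted by the subsequent flood of smaller left-move midpoints, keeping the held set below the ceiling.

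Given these facts I would take $t$ to be an appropriate quantile of the sampled values and show that, with probability $\ge1/2$, $d_{[t]}(X)$ falls short of $d_{[t]}(S)$ by more than $\varepsilon$; with the first paragraph this yields non-robustness for every $\delta<1/2$. When the sample size is $O(1/\varepsilon)$ this is easy: let $t$ be the largest sampled value and note that with probability $\ge1/2$ no right move occurs in the last $\varepsilon$-fraction of levels (a single geometric-tail estimate over $L$ independent trials), which forces $d_{[t]}(X)<1-\varepsilon'$ while $d_{[t]}(S)=1$. The hard part, where I expect most of the work to lie, is the regime in which the sample size is $\omega(1/\varepsilon)$ — allowed since it may be as large as $\Theta(\log N/\log n)=\Theta(n/\log n)$ — because then the last right move is essentially always at the very end and this threshold is useless; one needs a finer argument, most likely with a non-uniform choice of the per-level repetition counts $L_j$ arranged so that a constant fraction of the \emph{mass} of the stream sits strictly above a constant fraction of the sampled values, to still extract an $\Omega(1)$ sup-norm gap between the two CDFs with constant probability. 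The remaining technical ingredient, specific to $\ReservoirSample$, is a Chernoff/martingale bound on the eviction process ruling out the survival of a stale large element to the end; the constraint $k<c\ln N/\ln n$ resurfaces there as the guarantee that the stream is long enough to perform this cleanup.
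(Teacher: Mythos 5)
Your level-based ``precision-recycling'' attack is a genuinely different route from the paper's and can be made to work, so let me compare the two. The paper's adversary submits $n$ \emph{distinct} values and splits the working interval not at the midpoint but at the $(1-p')$-fraction point, where $p'=\max\{p,\ln n/n\}$: an unsampled (left) move then shrinks the interval by a factor $(1-p')\approx 1$ while a sampled (right) move shrinks it by a factor $p'$, and because right moves are rare ($\approx np'$ of them), the total shrinkage over $n$ rounds is only $(p')^{O(np')}(1-p')^{n}\approx N/n$, so a universe of size $N=n^{\Theta(\ln n)}$ already suffices. The invariant is maximal: at every step all sampled elements lie at or below $a_i$ and all unsampled elements lie at or above $b_i$, so the final sample is \emph{exactly} the $|S|$ smallest stream elements, and with $t$ the largest sampled value, $d_{[1,t]}(S)=1$ while $d_{[1,t]}(X)=|S|/n\lesssim 2p'<1/2$. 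Your variant — $\log N_0$ midpoint-split levels, each repeated $L=n/\log N_0$ times, going right iff some repetition was accepted — is a second legitimate way to spend $n$ rounds without exponential precision, and the corresponding invariant (every accepted element equals some right-move midpoint, hence is $\le a_{\mathrm{final}}$) is correct, for both Bernoulli and reservoir.

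The gap is in your density accounting, and it is what makes you see a phantom ``hard regime.'' Set $t=a_{\mathrm{final}}$; then $d_{[1,t]}(S)=1$ unconditionally. For $d_{[1,t]}(X)$, the fact you need — and state only vaguely — is that \emph{every} left-move midpoint $m_j$ satisfies $m_j>a_{\mathrm{final}}$, because after that left move $b$ is set to $m_j$, so all later endpoints (including $a_{\mathrm{final}}$) stay strictly below $m_j$. Consequently $d_{[1,t]}(X)$ equals exactly the \emph{fraction of levels that are right moves}, with no dependence on when those levels occur. Your ``easy-case'' reasoning — that with probability $1/2$ no right move falls in the last $\varepsilon$-fraction of levels, hence $d_{[1,t]}(X)<1-\varepsilon'$ — conflates the \emph{index} of the last right move with the \emph{rank} of $t$, and gives a bound that is both incorrect in form and far too weak. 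Once you instead write $d_{[1,t]}(X)=(\#\text{right moves})/\log N_0$ and observe that the number of right moves is at most the total number of acceptances ($\le 2pn$ for Bernoulli, $\le 4k\ln n$ for reservoir, each with probability $\ge 1/2$ by Markov), the hypothesis $p<c\ln N/(n\ln n)$ (resp.\ $k<c\ln N/\ln n$) makes this fraction less than $1/2>\varepsilon$ for \emph{all} allowed sample sizes simultaneously: there is no $\omega(1/\varepsilon)$ regime requiring non-uniform $L_j$, and the reservoir eviction worry is also moot, since eviction can only remove elements from $S$ and never places anything above $a_{\mathrm{final}}$ into it. Fixing this bookkeeping, your attack proves the theorem (in fact with a slightly better constant for large $N$ than the paper's); as written, the analysis does not.
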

\noindent
The proof can be found in Section \ref{sec:attack}.
\paragraph{\bf Continuous robustness.}
The condition of $(\varepsilon, \delta)$-robustness requires that the sample will be $\varepsilon$-representative  of the stream \emph{in the end of the process}. What if we wish the sample to be representative of the stream \emph{at any point} throughout the stream?
Formally, we say that a 
sampling algorithm is \emph{$(\varepsilon, \delta)$-continuously robust} if, 
with probability 
at least $1-\delta$, at \emph{any} point $i \in [n]$ the sampled set $S_i$ is an 
$\varepsilon$-approximation of the first $i$ elements of the stream, i.e., of $X_i = (x_1, \ldots, x_i)$.
The next theorem shows that continuous robustness of $\ReservoirSample$ can be obtained with just a small overhead compared to ``standard'' robustness. (For $\UniformSample$ one cannot hope for such a result to be true, at least for the above definition of continuous robustness.)
\begin{theorem}
	\label{thm:continuous}
	There exists $c>0$, such that for any $0 < \varepsilon, \delta < 1/2$, set 
	system $(U, \mathcal{R})$, and stream length $n$, $\ReservoirSample$ with 
	parameter 
	$k \ge c \cdot \frac{\ln |\mathcal{R}| + \ln 1 / \delta + \ln 1 / \varepsilon + \ln \ln n}{\varepsilon^2}$ 
	is 
	$(\varepsilon,\delta)$-continuously robust. 
	
	Moreover, if only continuous robustness against a static adversary is 
	desired, then the $\ln |\mathcal{R}|$ term can be replaced with the 
	VC-dimension of $(U, \mathcal{R})$.
\end{theorem}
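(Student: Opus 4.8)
The plan is to reduce \emph{continuous} robustness to the pointwise robustness of \Cref{thm:UB} applied at a sparse set of \emph{checkpoints}, and then to show that between two consecutive checkpoints neither the stream statistics nor the reservoir contents can move by more than $O(\varepsilon)$. A naive union bound of \Cref{thm:UB} over all $n$ prefixes (failure $\delta/n$ each) would force a sample of size $\Theta((\ln|\mathcal{R}| + \ln(n/\delta))/\varepsilon^2)$, which is off by a $\ln n$ factor; instead I would use a \emph{geometric} net of checkpoints $i_0 = k$, $i_{j+1} = \lceil (1+\varepsilon/30)\, i_j \rceil$, continued until $i_m \ge n$. There are only $m = O(\ln n/\varepsilon)$ of them, so $\ln m = O(\ln\ln n + \ln(1/\varepsilon))$, which is exactly the extra slack appearing in the statement. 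Throughout one may assume $n > k$ (else $S_i = X_i$ for all $i$) and, using the hypothesis on $k$ with a suitable constant, that $k \ge 30/\varepsilon$; moreover for $i \le k$ we have $S_i = X_i$, so only indices $i > k$ need attention.

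\textbf{Checkpoints are accurate.} For each $j \ge 1$, the first $i_j$ rounds of the continuous game form exactly an instance of the pointwise robustness game of length $i_j$, with final reservoir $S_{i_j}$; so by \Cref{thm:UB} (with error $\varepsilon/3$ and confidence $\delta/(2m)$) the set $S_{i_j}$ is an $(\varepsilon/3)$-approximation of $X_{i_j}$ except with probability $\delta/(2m)$, as long as $k \ge 18(\ln|\mathcal{R}| + \ln(4m/\delta))/\varepsilon^2$. The checkpoint $i_0 = k$ is trivial since $S_k = X_k$. A union bound over the $m$ checkpoints gives that, with probability $\ge 1 - \delta/2$, every $S_{i_j}$ is an $(\varepsilon/3)$-approximation of $X_{i_j}$.

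\textbf{Small drift between checkpoints.} Fix $i$ with $k < i \le n$ and let $i_j < i \le i_{j+1}$. For the stream, $i\, d_R(X_i) - i_j\, d_R(X_{i_j}) \in [0,\, i-i_j]$ for every $R$, and since $i - i_j \le (\varepsilon/30)\, i_j + 1 \le (\varepsilon/15)\, i_j$, an elementary computation gives $|d_R(X_i) - d_R(X_{i_j})| \le \varepsilon/15 \le \varepsilon/3$. For the reservoir, an element of index at most $i_j$ can enter only at its own arrival step, so if it is present at step $i$ it was already present at step $i_j$; hence $S_i$ arises from $S_{i_j}$ by at most $T_j$ insertions and at most $T_j$ deletions, where $T_j$ is the number of steps in $(i_j, i_{j+1}]$ at which $\ReservoirSample$ inserts, giving $|d_R(S_i) - d_R(S_{i_j})| \le T_j/k$ for all $R$. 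The key point is that the insertion indicators are mutually independent Bernoulli variables (the one at step $t$ having parameter $\min(1, k/t)$), \emph{independent of $\Adversary$ and of the current sample}; thus $T_j$ is a sum of independent Bernoullis with $\E{T_j} \le k\ln(i_{j+1}/i_j) \le k\varepsilon/15$, and a Chernoff bound gives $\pr{T_j \ge k\varepsilon/3} \le e^{-\Omega(k\varepsilon)} \le \delta/(2m)$, since $k = \Omega(\ln(m/\delta)/\varepsilon)$ follows from the hypothesis on $k$. Another union bound over the $m$ windows gives that, with probability $\ge 1 - \delta/2$, $T_j \le k\varepsilon/3$ for every $j$.

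\textbf{Putting it together, and the ``moreover''.} On the intersection of the two good events (probability $\ge 1 - \delta$), for every $i$, its checkpoint $i_j$, and every $R \in \mathcal{R}$,
\[
|d_R(S_i) - d_R(X_i)| \le |d_R(S_i) - d_R(S_{i_j})| + |d_R(S_{i_j}) - d_R(X_{i_j})| + |d_R(X_{i_j}) - d_R(X_i)| \le \tfrac{\varepsilon}{3} + \tfrac{\varepsilon}{3} + \tfrac{\varepsilon}{3} = \varepsilon,
\]
so all prefixes are $\varepsilon$-approximated simultaneously, and the chosen $k$ is $O((\ln|\mathcal{R}| + \ln(1/\delta) + \ln(1/\varepsilon) + \ln\ln n)/\varepsilon^2)$. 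For the ``moreover'', note that the drift argument never uses adaptivity, and against a static adversary $S_{i_j}$ is a uniform sample without replacement of $X_{i_j}$, so at the checkpoints one may invoke the classical VC-dimension $\varepsilon$-approximation bound instead of \Cref{thm:UB}, replacing $\ln|\mathcal{R}|$ with the VC-dimension of $(U, \mathcal{R})$. The hard part, I expect, is the drift step --- pinning down the randomness (the insertion coins) that is genuinely independent of the adaptive adversary, which is what makes each per-window failure probability exponentially small in $k\varepsilon$ and hence affordable over all $O(\ln n/\varepsilon)$ windows; tuning the geometric ratio and the constant $c$ so that both union bounds cost $\delta/2$ is routine.
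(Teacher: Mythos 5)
Your proposal matches the paper's proof in both strategy and essentially all technical details: a geometric net of checkpoints with ratio $1+\Theta(\varepsilon)$, yielding $O(\varepsilon^{-1}\ln n)$ of them; a black-box application of \Cref{thm:UB} at each checkpoint with confidence $\delta/\Theta(t)$ (which is exactly what injects the extra $\ln(1/\varepsilon)+\ln\ln n$ slack); a bound on the number of reservoir insertions inside each window, using that the insertion coins at steps $t$ are independent Bernoullis with parameter $k/t$ regardless of the adversary, controlled by a per-window Chernoff bound at confidence $\delta/\Theta(t)$; and a final three-term error split of $\varepsilon$. The only cosmetic difference is that the paper packages the drift argument into Claims \ref{claim:approx_of_approx}–\ref{claim:corollary_continuous} (comparing $S_{i_j}$ against $X_i$ directly and then swapping in $S_i$), whereas you write it as a single triangle inequality through $X_{i_j}$; these are equivalent, and your constants ($\varepsilon/3,\varepsilon/3,\varepsilon/3$ vs.\ the paper's $\varepsilon/4,\varepsilon/4,\varepsilon/2$) are immaterial. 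The ``moreover'' is handled the same way in both.
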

We are not aware of a previous analysis of continuous robustness, even in the static 
setting. The proof, appearing in \Cref{sec:continuous}, follows by applying Theorem \ref{thm:UB} (or its static 
analogue) in carefully picked ``checkpoints'' $k = i_1 \leq i_2 \leq \ldots 
\leq i_t = n$ along the stream, where $t = O(\varepsilon^{-1} \ln n)$. It shows 
that if the sample $S_i$ is representative of the stream $X_i$ in any of the 
points $i=i_1, \ldots, i_{t-1}$, then with high probability, the sample is also 
representative in any other point along the stream. 
(We remark that a similar statement with weaker dependence on $n$ can be 
obtained from Theorem \ref{thm:UB} by a straightforward union bound.) The proof 
can be found in \Cref{sec:continuous}.

\paragraph{\bf Comparison to deterministic sampling algorithms.}
Our results show that sampling algorithms provide an $\varepsilon$-approximation 
in the adversarial model. One advantage of using the notion of 
$\varepsilon$-approximation is its wide array of applications, where for each such task we get a
streaming algorithm in the adversarial model as described in the following 
subsection. We stress that for any specific task a {\em deterministic} 
algorithm that works in the static setting will also automatically be 
robust in the adversarial setting. However, deterministic 
algorithms tend to be more complicated, and in some cases they require larger memory.
Here, we focus on showing that the most simple and 
generic sampling algorithms ``as is'' are robust in our adaptive model and 
yield a representative sample of the data
that can be used for many different applications. 

The best known {\em deterministic} algorithm for computing an 
$\varepsilon$-approximating sample in the streaming model is that of Bagchi et 
al.~\cite{BagchiCEG07}. The sample size they obtain is $O(\varepsilon^{-2} \ln 
1/\varepsilon)$; the working space of their algorithm and the processing time per element are of the form 
$\varepsilon^{-2d-O(1)} (\ln n)^{O(d)}$, where $d$ is the \emph{scaffold 
dimension}\footnote{The scaffold dimension is a variant of the VC-dimension 
equal to $\lceil\ln|\mathcal{R}| / \ln|U|\rceil$.} of the set system. 
The exact bounds are rather intricate, see Corollary 4.2 in~\cite{BagchiCEG07}. 
While the space requirement of their approach does not have a dependence on
$\ln |\mathcal{R}|$,
its dependence on $\varepsilon$ and $\ln n$ is generally worse 
than ours, making their bounds somewhat incomparable to ours. 
Finally, we note that there exist more efficient methods to generate an $\varepsilon$-approximation in some special cases, e.g., when the set system constitutes of rectangles or halfspaces \cite{Suri2004}. 

\subsection{Applications of Our Results}\label{sec:applications}
We next describe several representative applications and usages of 
$\varepsilon$-approximations (see also \cite{BagchiCEG07} for more applications in 
the area of robust statistics). For some of these applications, there exist 
deterministic algorithms known to require less memory than the simple random 
sampling models discuss in this paper. However, one area where our generic 
random sampling approach shines compared to deterministic approaches is the 
\emph{query complexity} or \emph{running time} (under a suitable computational 
model). Indeed, while deterministic algorithms must inherently query \emph{all} 
elements in the stream in order to run correctly, our random sampling 
methods query just a small sublinear portion of the elements in the stream. 

Consequently, to the best of our knowledge, Bernoulli and reservoir sampling are the first two methods known to compute an 
$\varepsilon$-approximation (and as a byproduct, solve the tasks described in this subsection) in adversarial situations where it is unrealistic 
or too costly to query all elements in the stream. The last part of this subsection 
exhibits an example of one such situation.

\paragraph{\bf Quantile approximation.}
As was previously mentioned, $\varepsilon$-approximations have a deep 
connection 
to approximate median (and more generally, quantile estimation).
Assume the universe $U$ is well-ordered.
We say that a streaming algorithm is an $(\varepsilon,\delta)$-\emph{robust 
quantile 
sketch} if, in our {\em adversarial model}, it provides a sample that allows to 
approximate the rank\footnote{The \emph{rank} of an element $x_i$ in a stream $x_1, \ldots, x_n$ is the total amount of elements $x_j$ in the stream so that $x_j \leq x_i$.} of any 
element in the stream up to additive error $\varepsilon n$ with probability at least $1-\delta$.
Observe that this is achieved with an $\varepsilon$-approximation with 
respect to the set system $(U,\mathcal{R})$ where $\mathcal{R}=\{[1,b] : b \in 
U\}$. For example, set $b$ to be the median 
of the stream. Since the density of the range $[1,b]$ is preserved in the 
sample, we know that the median of the sample will be $\varepsilon$-close to 
the median of the stream. This works for any other quantile simultaneously.
The 
sample size 
is $\Theta(\frac{\ln|U| + \ln 
	(1/\delta)}{\varepsilon^2})$.
\begin{corollary}                                                               
For any $0 < \varepsilon, \delta < 1$, well-ordered universe $U$, and stream 
length $n$, $\UniformSample$ with parameter $p \geq 10\cdot\frac{\ln|U| + \ln 
(4/\delta)}{\varepsilon^2n}$ is an $(\varepsilon,\delta)$-robust quantile 
sketch. The same holds for the $\ReservoirSample$ algorithm with parameter $k 
\geq 
2\cdot\frac{\ln|U| + \ln (2/\delta)}{\varepsilon^2}$.
\end{corollary}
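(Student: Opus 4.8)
The plan is to obtain the corollary as a direct instantiation of \Cref{thm:UB}. First I would fix the set system $(U,\mathcal{R})$ with $\mathcal{R} = \{[1,b] : b \in U\}$, where $[1,b] := \{u \in U : u \le b\}$ under the given well-order; since distinct choices of $b$ yield distinct prefixes, $|\mathcal{R}| = |U|$. Substituting $|\mathcal{R}| = |U|$ into the two bounds of \Cref{thm:UB} reproduces verbatim the parameter settings in the corollary statement, so \Cref{thm:UB} tells us that, against any (even computationally unbounded, adaptive) adversary, the sample $S$ output by $\UniformSample$ (resp.\ $\ReservoirSample$) with those parameters is an $\varepsilon$-approximation of the full stream $X$ with respect to $(U,\mathcal{R})$ with probability at least $1-\delta$.

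Next I would unpack what an $\varepsilon$-approximation for this particular $\mathcal{R}$ means. For every $b \in U$ we have $d_{[1,b]}(X) = \mathrm{rank}_X(b)/n$, where $\mathrm{rank}_X(b) = |\{i \in [n] : x_i \le b\}|$ is the rank of $b$ in the stream. Thus, on the $(1-\delta)$-probability event that $S$ is an $\varepsilon$-approximation, the quantity $\widehat{\mathrm{rank}}(b) := n \cdot d_{[1,b]}(S)$ --- a deterministic function of the final sample, equal to $n$ times the fraction of sampled elements that are at most $b$ --- satisfies $|\widehat{\mathrm{rank}}(b) - \mathrm{rank}_X(b)| \le \varepsilon n$ simultaneously for all $b \in U$. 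This is exactly the guarantee required of an $(\varepsilon,\delta)$-robust quantile sketch, and it holds in the adversarial model precisely because the estimator is extracted from $S$ with no further interaction with the adversary. In particular, to approximate the $q$-quantile one reports any sampled element whose sample-rank $d_{[1,s]}(S)$ is closest to $q$; its true rank is then within $\varepsilon n$ of $qn$, uniformly over all $q$.

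I do not expect a genuine obstacle here: the argument is a black-box reduction to \Cref{thm:UB}. The only points needing care are the (trivial) counting $|\mathcal{R}| = |U|$, so that the constants line up exactly with the theorem, and the observation that the reduction is non-interactive, so that the robustness guarantee against an adaptive adversary carries over to the quantile sketch unchanged.
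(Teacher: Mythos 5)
Your proposal matches the paper's argument exactly: both instantiate \Cref{thm:UB} with the set system $\mathcal{R} = \{[1,b] : b \in U\}$, note that $|\mathcal{R}| = |U|$ so the parameter bounds coincide, and then observe that an $\varepsilon$-approximation with respect to these prefix intervals is precisely a rank/quantile approximation uniform over all thresholds. The extra remark that the estimator is a non-interactive post-processing of $S$ is a nice clarification but does not change the route.
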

A corollary in the same spirit regarding \emph{continuously} robust quantile sketches can 
be derived from Theorem \ref{thm:continuous}. 

\paragraph{\bf Range queries.}
Suppose that the universe is of the form $U=[m]^d$ for some parameters $m$ and 
$d$. One basic problem is that of {\em range queries}: one is given a set of 
ranges $\mathcal{R}$ and each query consists of a range $R \in \mathcal{R}$ 
where the desired answer is the number of points in the stream that are in this 
range. Popular choices of such ranges are axis-aligned or rotated boxes, spherical ranges 
and simplicial ranges. An $\varepsilon$-approximation allows us to answer such 
range queries up to an additive error of
$\varepsilon n$. Suppose the sampled set is $S$, then an answer is given by 
computing $d_{R}(S) \cdot n/|S|$. For example, when $\mathcal{R}$ consists of 
all 
axis-parallel boxes, $\ln |\mathcal{R}| = O(d\ln m)$ and thus the 
sample size required to answer range queries that are robust against 
adversarial streams is $|S| = O\left(\frac{d\ln|m| + \ln 
(1/\delta)}{\varepsilon^2}\right)$; for rotated boxes, one should replace $d$ 
with $d^2$ in this expression. See \cite{BagchiCEG07} for 
more details on the connection between $\varepsilon$-approximations and 
range queries.

\paragraph{\bf Center points.}
Our result is also useful for computing $\beta$-center points. A point $c$ in 
the stream is a $\beta$-center point if every closed halfspace containing $c$ 
in fact contains at least $\beta n$ points of the stream. In \cite[Lemma 
6.1]{ClarksonEMST96} it has been shown that an $\varepsilon$-approximation (with respect to half-spaces) can 
be used to get a $\beta$-center point for suitable choices of the parameters. For example, 
setting $\varepsilon=\beta/5$ we get that a $6\beta/5$-center of the sample 
$S$ is a $\beta$-center of the stream $X$. Thus, we can compute a 
$\beta$-center of a stream in the adversarial model. See also 
\cite{BagchiCEG07}.

\paragraph{\bf Heavy hitters.}
Finding those elements that appear many times in a stream is a fundamental problem in data mining, with a myriad of practical applications.
In the \emph{heavy hitters} problem, there is a threshold  $\alpha$ and an error parameter $\varepsilon$. The 
goal 
is to output a list of elements such that if an element $x$ appears more than 
$\alpha n$ 
times in the stream (i.e., $d_x(X) \ge \alpha$) it must be included in the 
list, and if an element appears 
less than $(\alpha-\varepsilon)n$ times in the stream (i.e., $d_x(X) \le \alpha 
- \varepsilon$ it cannot be included in the list.

Our results yield a simple and efficient heavy hitters streaming algorithm in 
the 
adversarial model. For any universe $U$ let
$\mathcal{R}=\{\{a\} : a \in U\}$ be the set of all singletons. Now, pick 
$\varepsilon' = \varepsilon/3$ and use either Bernoulli or reservoir sampling 
to compute an $\varepsilon'$-approximation $S$ of the stream $X$, outputting 
all elements $x \in S$ with $d_{\{x\}}(S) 
\ge \alpha - \varepsilon'$. Indeed, if $d_a(X) \ge \alpha$ then $d_x(S) \ge 
\alpha - \varepsilon'$. On the 
other hand, if $d_x(X) \le \alpha - \varepsilon$ then $d_x(S) \le 
\alpha-\varepsilon + \varepsilon' < \alpha-\varepsilon'$.
\begin{corollary}                                                               
There exists $c>0$ such that for any $0 < \varepsilon, \delta < 1/2$, 
universe $U$, and stream length $n$, $\UniformSample$ with 
parameter $p \geq c\cdot\frac{\ln|U| + \ln 
	(1/\delta)}{\varepsilon^2n}$ solves the heavy hitters problem with error 
	$\varepsilon$ in the adversarial model. The same holds for 
	$\ReservoirSample$ with parameter $k \geq c \cdot\frac{\ln|U| + 
	\ln 
	(1/\delta)}{\varepsilon^2}$.
\end{corollary}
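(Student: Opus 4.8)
The plan is to obtain the corollary directly from Theorem~\ref{thm:UB}, instantiated with the singleton set system $\mathcal{R} = \{\{a\} : a \in U\}$, for which $|\mathcal{R}| = |U|$. For this set system an $\varepsilon'$-approximation $S$ of the stream $X$ satisfies $|d_{\{a\}}(X) - d_{\{a\}}(S)| \le \varepsilon'$ for \emph{every} $a \in U$ simultaneously, i.e.\ it preserves all per-element frequencies up to additive error $\varepsilon'$. Concretely I would fix $\varepsilon' = \varepsilon/3$, run $\UniformSample$ (resp.\ $\ReservoirSample$) with the parameter prescribed by Theorem~\ref{thm:UB} for approximation error $\varepsilon'$ and confidence $\delta$, and report every element $x$ in the resulting sample with $d_{\{x\}}(S) \ge \alpha - \varepsilon'$. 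By Theorem~\ref{thm:UB}, with probability at least $1-\delta$ the sample $S$ is such an $\varepsilon'$-approximation no matter what strategy the adaptive adversary uses; condition on this event for the rest of the argument.

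Next I would verify that the reported list satisfies the heavy-hitters promise. For completeness: if $d_a(X) \ge \alpha$ then $d_{\{a\}}(S) \ge d_{\{a\}}(X) - \varepsilon' \ge \alpha - \varepsilon'$, which is positive in the standard regime $\alpha \ge \varepsilon > \varepsilon'$, so $a$ actually occurs in $S$ and clears the threshold, hence is reported. For soundness: if $d_x(X) \le \alpha - \varepsilon$ then $d_{\{x\}}(S) \le d_{\{x\}}(X) + \varepsilon' \le \alpha - \varepsilon + \varepsilon' = \alpha - 2\varepsilon/3 < \alpha - \varepsilon'$, so $x$ is not reported. This is exactly why the slack is taken to be $\varepsilon' = \varepsilon/3$ rather than, say, $\varepsilon/2$: it keeps the soundness margin $\varepsilon - \varepsilon'$ strictly above the approximation error $\varepsilon'$.

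Finally there is the routine constant bookkeeping: substituting $\varepsilon' = \varepsilon/3$ into the bounds of Theorem~\ref{thm:UB} contributes a factor $9$, and $\delta < 1/2$ gives $\ln(4/\delta) < 3\ln(1/\delta)$ and $\ln(2/\delta) < 2\ln(1/\delta)$, so the requirements $p \ge 90\cdot\frac{\ln|U|+\ln(4/\delta)}{\varepsilon^2 n}$ and $k \ge 18\cdot\frac{\ln|U|+\ln(2/\delta)}{\varepsilon^2}$ are both implied by the stated bounds with an absolute constant such as $c = 270$. I do not expect any genuine obstacle, since the statement is an immediate consequence of Theorem~\ref{thm:UB}. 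The only points that need a moment of care are the implicit assumption $\alpha > \varepsilon'$ (so that a heavy element genuinely survives in the sample, rather than merely having a nonnegative predicted count), which holds in the natural heavy-hitters regime $\alpha \ge \varepsilon$, and choosing the slack $\varepsilon'$ small enough that the completeness and soundness conditions do not overlap.
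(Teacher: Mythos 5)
Your proposal is correct and follows the paper's proof essentially verbatim: both reduce heavy hitters to $\varepsilon'$-approximation for the singleton set system $\mathcal{R}=\{\{a\}: a\in U\}$ with slack $\varepsilon'=\varepsilon/3$, report elements $x$ with $d_{\{x\}}(S)\ge\alpha-\varepsilon'$, and invoke Theorem~\ref{thm:UB} with $|\mathcal{R}|=|U|$. Your additional remarks on why $\varepsilon/3$ rather than $\varepsilon/2$ and on the implicit regime $\alpha\ge\varepsilon$ are accurate but not part of the paper's (terser) argument.
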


\paragraph{\bf Clustering.}
The task of partitioning data elements into separate groups, where the elements 
in each group are ``similar'' and elements in different groups are 
``dissimilar'' is fundamental and useful for numerous applications across 
computer science. There has been lots of interest on clustering in a streaming 
setting, see e.g.~\cite{Ghesmoune2016} for a survey on recent results. Our 
results suggest a generic framework to accelerate clustering algorithms in the 
adversarial model: Instead of running 
clustering on the full data, one can simply sample the data to obtain (with 
high probability, even against an adversary) an $\varepsilon$-approximation of 
it, run the clustering algorithm on the sample, and then extrapolate the 
results to the full dataset.

\paragraph{\bf Sampling in modern data-processing systems.}
It is very common to use random sampling (sometimes ``in disguise'') in 
modern data-intensive systems that operate on streaming data, arriving in an 
online manner.
As an illustrative example, consider the following \emph{distributed database} \cite{Ozsu2011} setting. 
Suppose that a database system must receive and process a huge amount of queries per second. It is unrealistic for a single server to handle all the queries, and hence, for load balancing purposes, each incoming query is 
randomly assigned to one of $K$ query-processing servers. 
Seeing that the set of queries that each such server receives is 
essentially a Bernoulli random sample (with parameter $p = 1/K$) of the full 
stream, one hopes that the portion of the stream sampled by each of these 
servers would truthfully represent the whole data stream (e.g., for query optimization purposes), even if the 
stream changes with time (either unintentionally or by a malicious adversary). 
Such ``representation guarantees'' are also desirable in distributed machine 
learning systems \cite{goyal2017imagenet1hr, smith2017decay}, where each processing unit 
learns a model according to the portion of the data it received, and the models 
are then aggregated, with the hope that each of the units processed ``similar'' 
data.

In general, modern data-intensive systems like those described above become 
more and more complicated with time, consisting of a large number of different 
components. Making these systems \emph{robust} against environmental changes in the data, 
let alone {\em adversarial} changes, is one of the greatest challenges in 
modern computer science. 
From our perspective, the following question naturally emerges:
\begin{center}
	\emph{Is random sampling a risk in modern data processing systems?}
\end{center}
Fortunately, our results indicate that the answer to this question is largely \emph{negative}. 
Our upper bounds, Theorems \ref{thm:UB} and \ref{thm:continuous}, show that a sufficiently large sample suffices to circumvent adversarial changes of the environment.

\subsection{Related Work}

\paragraph{\bf Online learning.}

One related field to our work is {\em online learning}, which was 
introduced for settings where the data is given in a sequential online manner 
or where it is necessary for the learning algorithm to adapt to changes in the 
data. Examples include stock price predictions, ad click prediction, and more 
(see \cite{Shalev-Shwartz12} for an overview and more examples). 

Similar to our model, online learning is viewed as a repeated game between a 
learning algorithm (or a 
predictor) and the environment (i.e., the adversary). It considers $n$ rounds 
where in each round the environment submits an instance $x_i$, the learning 
algorithm then makes a prediction for $x_i$, the environment, in turn, chooses 
a 
loss for this prediction and sends it as feedback to the algorithm. The goal in this model is usually
to minimize regret (the sum of losses) compared to the best fixed 
prediction in hindsight. This is the typical setting (e.g., 
\cite{HazanAK07,SrebroST10}), however, many different variants exist (e.g., 
\cite{DanielyGS15,ZhangLZ18}).


\paragraph{\bf PAC learning.}
In the PAC-learning framework \cite{Valiant1984}, the learner algorithm 
receives samples generated from an unknown distribution and must choose a 
hypothesis function from a family of hypotheses that best predicts the data 
with respect to the given distribution. It is known that the number of samples 
required for a class to be learnable in this model depends on the VC-dimension 
of the class.

A recent work of Cullina et al.~\cite{Cullina2018} investigates 
the effect of evasion adversaries on the PAC-learning framework, coining the 
term of \emph{adversarial VC-dimension} for the parameter governing the sample 
complexity. Despite the name similarity, their context is seemingly unrelated 
to ours (in particular, it is not a streaming setting), and correspondingly, 
their notion of adversarial VC-dimension does not seem to relate to our work.

\paragraph{\bf Adversarial examples in deep learning.}

A very popular line of research in modern deep learning proposes methods to attack neural networks, and countermeasures to these attacks. In such a 
setting, an adversary performs adaptive queries to the learned model in order to fool the model via a malicious input. The learning 
algorithms usually have an underlying assumption that the training and test 
data are generated from the same statistical distribution. However, in 
practice, the presence of an adaptive adversary violates this assumption.
There are many devastating examples of attacks on learning models 
\cite{SzegedyZSBEGF13,BiggioCMNSLGR13,PapernotMGJCS17,BiggioR18,MontasserHS19} 
and 
we stress that currently, the understanding of techniques to 
defend against such adversaries is rather limited
\cite{GoodfellowMP18,McCoyd018,MahloujifarM19,MontasserHS19}. 

\paragraph{\bf Maintaining random samples.}
Reservoir sampling is a simple and elegant algorithm for maintaining a random 
sample of a stream \cite{Vitter85}, and since its proposal, many flavors have 
been introduced.
Chung, Tirthapura, Woodruff \cite{Chung2016ASM} generalized reservoir sampling 
to the setting of multiple distributed streams, which need to coordinate in 
order to continuously respond to queries over the union of all streams observed 
so far (see also Cormode et al.~\cite{Cormode2012ContinuousSF}). 
Another variant is weighted reservoir sampling where
the probability of sampling an element is proportional to a weight associated 
with the element in the stream \cite{EfraimidisS06,BravermanOV15}. A 
distributed version as above was recently considered for the weighted case as well 
\cite{Jayaram2019}.

\subsection{Paper Organization}
\Cref{sec:model} contains an overview of our adversarial model and a 
more precise and detailed definition than the one given in the introduction. In 
\Cref{sec:prelims} we mention several concentration inequalities required for our analysis. 
In 
\Cref{sec:main-technical} we present and prove our main technical Lemma,
from which we derive Theorem \ref{thm:UB}. This includes analysis of both 
$\UniformSample$ and $\ReservoirSample$. In \Cref{sec:attack} we 
present our ``attack'', i.e., our lower bound showing the tightness of our 
result. Finally, in \Cref{sec:continuous}, we prove our upper bounds in the 
continuous setting.


\section{The Adversarial Model for Sampling}
\label{sec:model}
In this section, we formally define the online adversarial model discussed in 
this paper. Roughly speaking, we say that $\Sampler$ is an
$(\varepsilon,\delta)$-robust sampling algorithm for a set system 
$(U,\mathcal{R})$ if for any adversary
choosing an adaptive stream of elements $X = (x_1, \ldots, x_n)$, the final 
state of the 
sampling algorithm $\sigma_n$ is an $\varepsilon$-approximation of the stream 
with probability $1-\delta$.
This is formulated using a game, $\AdaptiveGame$, between 
two players, $\Sampler$ and $\Adversary$.

\paragraph{\bf Rules of the game:}
\label{subsec:adversarial_model_rules}
\begin{enumerate}
	\item $\Sampler$ is a streaming algorithm, which gets a sequence of $n$ elements one by 
	one $x_1,\ldots,x_n$ in an online manner (the sampling algorithms we discuss in this paper do not need to know $n$ in advance).
	Upon receiving an element $x_i$, 
	$\Sampler$ can perform an arbitrary computation (the running time can be
	unbounded) and update a local state $\sigma$. We denote the local state after $i$ steps by $\sigma_i$, and write $\sigma_i 
	\gets \Sampler(\sigma_{i-1},x_i)$.
	\item The stream is chosen adaptively by $\Adversary$: 
	a probabilistic (unbounded) player that, given all previously sent elements $x_1, \ldots, x_{i-1}$ and 
	the current state $\sigma_{i-1}$, chooses the next element $x_i$ to submit. The \emph{strategy} that Adversary employs along the way, that is, the probability distribution over the choice of $x_i$ given any possible set of values $x_1, \ldots, x_{i-1}$ and $\sigma_{i-1}$, is fixed in advance. %
	The underlying (finite or infinite) set from which $\Adversary$ is allowed 
	to choose elements during the game is called the \emph{universe}, and 
	denoted by $U$. We assume that $U$ does not change along the game.
	\item Once all $n$ rounds of the game have ended, $\Sampler$ outputs $\sigma_n$. For the sampling algorithms discussed in this paper, $S := \sigma_n$ is a subsequence 
	of the stream $X = (x_1, \ldots, x_n)$.
	$S$ is usually called the \emph{sample} obtained by $\Sampler$ in 
	the game.
\end{enumerate}
For an illustration on the rules of the game see \Cref{fig:adaptive-game}.

\protocol
{The game $\AdaptiveGame$}
{The definition of the game $\AdaptiveGame_{\varepsilon}$ between a streaming algorithm $\Sampler$ and $\Adversary$. Here the adversary chooses the next element to the stream while given the state (memory) of the streaming algorithm thus-far. In the beginning of the game, $\Adversary$ receives the parameters $n, \varepsilon, (U, \mathcal{R})$ and knows exactly which sampling algorithm is employed by $\Sampler$.}
{fig:adaptive-game}
{
	\textbf{Parameters}: $n$, $\varepsilon$, $(U,\mathcal{R})$.
	\begin{enumerate}
		\item Set $\sigma_0 = \bot$.
		\item For $i=1 \ldots n$ do:
		\begin{enumerate}
			\item $\Adversary(\sigma_{i-1},x_1,\ldots,x_{i-1})$ submits the query $x_i$.
			\item Set $\sigma_i \gets \Sampler(\sigma_{i-1},x_i)$.
		\end{enumerate}
		\item Let $S=\sigma_n$, and output $1$ if $S$ is an 
		$\varepsilon$-representative sample 
		of $X=x_1,\ldots,x_n$ with respect to $(U,\mathcal{R})$, and $0$ 
		otherwise.
	\end{enumerate}
}

Using the game defined above, we now describe what it means for a sampling algorithm to be (adversarially) \emph{robust}. 
\begin{definition}[Robust sampling algorithm]
We say that a sampling algorithm $\Sampler$ is $(\varepsilon,\delta)$-robust with respect to the set system $(U, \mathcal{R})$ and the stream length $n$ if 
for and any (even unbounded) strategy of $\Adversary$, it holds that
	\begin{align*}
	\Pr[\AdaptiveGame(\Sampler,\Adversary)=1] \ge 1-\delta
	\end{align*}
	The memory size used by $\Sampler$ is defined to be the maximal size of $\sigma$ throughout the process of $\AdaptiveGame$.
\end{definition}
A stronger requirement that one can impose on the sampling algorithm is to hold an 
$\varepsilon$-approximation of the stream at \emph{any} step during the game.
To handle this, we define a continuous variant of $\AdaptiveGame$ which we denote 
$\ContinuousAdaptiveGame$, presented in \Cref{fig:cont-adaptive-game}.
\protocol
{The game $\ContinuousAdaptiveGame$}
{The game corresponding to the continuous variant, $\ContinuousAdaptiveGame$ between a 
streaming algorithm $\Sampler$ and an $\Adversary$. Here, $\Sampler$ is 
required to hold an $\varepsilon$-approximating sampled set $S_i$ after each step.}
{fig:cont-adaptive-game}
{
	\textbf{Parameters}: $n$, $\varepsilon$, $(U,\mathcal{R})$.
	\begin{enumerate}
		\item Set $\sigma_0 = \bot$.
		\item For $i=1 \ldots n$ do:
		\begin{enumerate}
			\item $\Adversary((\sigma_{i-1},S_{i-1}),x_1,\ldots,x_{i-1})$ 
			submits the query $x_i$.
			\item Set $(\sigma_i,S_i) \gets \Sampler(\sigma_{i-1},x_i)$.
			\item If $S_i$ is not an $\varepsilon$-approximation of 
			$X_i=x_1,\ldots,x_i$ with respect to $(U,\mathcal{R})$ then output 
			0 and halt.
		\end{enumerate}
		\item Output 1.
	\end{enumerate}
}

For the sampling algorithms that we consider, the state at any time $\sigma_i$ is essentially equal to the sample $S_i$. In any case, the definition of the framework given in \Cref{fig:cont-adaptive-game} generally allows $\sigma_i$ to contain additional information, if needed.
A sampling algorithm is called \emph{$(\varepsilon, \delta)$-continuously 
robust} if the following holds  with probability at least $1-\delta$: for any strategy of $\Adversary$, and all $i \in [n]$, the sample 
$S_i$ is an $\varepsilon$-approximation of the stream at time $i$. 
\begin{definition}[Continuously robust sampling algorithm]
We say that a sampling algorithm $\Sampler$ is $(\varepsilon,\delta)$-continuously robust with respect to the set system $(U, \mathcal{R})$ and the stream length $n$ if 
for and any (even unbounded) strategy of $\Adversary$, it holds that
\begin{align*}
\Pr[\ContinuousAdaptiveGame(\Sampler,\Adversary)=1] \ge 1-\delta
\end{align*}
The memory size used by $\Sampler$ is defined to be the maximal size of $\sigma$ 
throughout the process of $\ContinuousAdaptiveGame$.
\end{definition}

\paragraph{\bf Reservoir sampling.}\label{fig:reservoir-sampling}
For completeness, we provide the pseudocode of the reservoir sampling algorithm 
\cite{Vitter85, Knuth1997}. Here, $k$ denotes the (fixed) memory size of the 
algorithm, $i$ denotes the current round number, and $x_i$ is the currently 
received element.

\vspace{0.1cm}
\noindent \underline{$\ReservoirSample(k, i, \sigma_{i-1}, x_i)$}:
\begin{enumerate}
	\item If $i < k$ then parse $\sigma_{i-1}=x_1,\ldots,x_{i-1}$ and output 
	$\sigma_i= x_1,\ldots,x_i$.
	\item Otherwise, parse $\sigma_{i-1}=s_1,\ldots,s_k$.
	\item With probability $k/i$ do:\\ choose $j \in 
	[k]$ uniformly at random and output 
	$\sigma_i = \allowbreak s_1,\ldots,\allowbreak 
	s_{j-1},x_i,s_{j+1},\ldots,s_k$.
	\item Otherwise, output $\sigma_i=\sigma_{i-1}$.
\end{enumerate}

\section{Technical Preliminaries}
\label{sec:prelims}
The logarithms in this paper are usually of base $e$, and denoted by $\ln$. The
exponential function $\exp{(x)}$ is $e^x$. For an integer $n\in \N$ we denote 
by 
$[n]$ the set 
$\{1,\ldots,n\}$.
We state some concentration inequalities, useful for our analysis in later 
sections.
We start with the well-known Chernoff's inequality for sums of independent random variables.
\begin{theorem}[Chernoff Bound \cite{Chernoff1952}; see Theorem 3.2 in \cite{Chung2006}]\label{lem:chernoff}
	Let $X_1,\ldots,X_m$ be independent
	random variables that take the value 
	1 with probability $p_i$ and 0 otherwise, $X=\sum_{i=1}^{m}X_i$, and 
	$\mu=\E{X}$. Then for any $0 < \delta < 1$,
	$$\Pr[X \leq (1-\delta)\mu] \le \exp\left(-\frac{\delta^2 \mu}{2}\right)$$
	and
	$$\Pr[X \geq (1+\delta)\mu] \le \exp\left(-\frac{\delta^2 \mu}{2 + 2\delta/3 }\right).$$
\end{theorem}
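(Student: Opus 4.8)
The plan is to establish both tail bounds by the standard exponential-moment (Chernoff) method: apply Markov's inequality to $e^{-tX}$ for the lower tail, or to $e^{tX}$ for the upper tail, with a suitable $t>0$, control the resulting moment generating function using independence, and then optimize over $t$. The one estimate that drives both parts is the following: since the $X_i$ are independent $\{0,1\}$ variables with $\Pr[X_i=1]=p_i$, for every real $s$ we have
\[
\E{e^{sX}}=\prod_{i=1}^m\E{e^{sX_i}}=\prod_{i=1}^m\brac{1+p_i(e^{s}-1)}\le\prod_{i=1}^m\exp\brac{p_i(e^{s}-1)}=\exp\brac{\mu(e^{s}-1)},
\]
using $1+x\le e^{x}$ and $\sum_i p_i=\mu$; we will use this with $s=-t$ and with $s=t$.

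For the lower tail, Markov's inequality gives, for any $t>0$,
\[
\Pr[X\le(1-\delta)\mu]=\Pr[e^{-tX}\ge e^{-t(1-\delta)\mu}]\le e^{t(1-\delta)\mu}\,\E{e^{-tX}}\le\exp\brac{\mu(e^{-t}-1)+t(1-\delta)\mu}.
\]
The exponent is minimized at $t=\ln\tfrac{1}{1-\delta}>0$, which yields the bound $\brac{e^{-\delta}/(1-\delta)^{1-\delta}}^{\mu}=\exp\brac{-\mu\brac{\delta+(1-\delta)\ln(1-\delta)}}$. It then remains to check the elementary inequality $\delta+(1-\delta)\ln(1-\delta)\ge\delta^2/2$ for $0<\delta<1$; I would do this by setting $f(\delta)=(1-\delta)\ln(1-\delta)+\delta-\delta^2/2$, noting $f(0)=0$, and observing that $f'(\delta)=-\ln(1-\delta)-\delta\ge0$ on $(0,1)$ since $1-\delta\le e^{-\delta}$.

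For the upper tail, symmetrically, for any $t>0$,
\[
\Pr[X\ge(1+\delta)\mu]\le e^{-t(1+\delta)\mu}\,\E{e^{tX}}\le\exp\brac{\mu(e^{t}-1)-t(1+\delta)\mu},
\]
minimized at $t=\ln(1+\delta)$, giving $\brac{e^{\delta}/(1+\delta)^{1+\delta}}^{\mu}=\exp\brac{-\mu\brac{(1+\delta)\ln(1+\delta)-\delta}}$. The last step is the numerical inequality $(1+\delta)\ln(1+\delta)-\delta\ge\frac{\delta^2}{2+2\delta/3}=\frac{3\delta^2}{6+2\delta}$ for $\delta>0$, which I would verify by letting $g(\delta)=(1+\delta)\ln(1+\delta)-\delta-\frac{3\delta^2}{6+2\delta}$, checking $g(0)=0$, and reducing $g'(\delta)\ge0$ (after clearing the common denominator) to a polynomial-in-$\delta$ inequality that is checked termwise via the power series of $\ln(1+\delta)$. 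Pinning down exactly the denominator $2+2\delta/3$ here — rather than a cruder Taylor estimate, which would lose the tight constant since $(1+\delta)\ln(1+\delta)-\delta$ and $\tfrac{\delta^2}{2+2\delta/3}$ agree to third order at $\delta=0$ — is really the only delicate point; the exponential-moment machinery above is entirely routine, and since the statement is classical one could alternatively just invoke the cited references.
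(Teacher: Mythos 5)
The paper does not prove this statement; it is quoted as a classical result, citing Chernoff and Theorem 3.2 in Chung--Lu's survey, and is used as a black box. So there is no ``paper's own proof'' to compare against. Your argument is the standard exponential-moment (Cram\'er--Chernoff) derivation, and it is essentially correct: the MGF bound $\E{e^{sX}}\le\exp(\mu(e^s-1))$ via $1+x\le e^x$, Markov applied to $e^{\pm tX}$, optimization at $t=\ln\tfrac1{1-\delta}$ (\resp $t=\ln(1+\delta)$), and the reduction to the two scalar inequalities $\delta+(1-\delta)\ln(1-\delta)\ge\delta^2/2$ and $(1+\delta)\ln(1+\delta)-\delta\ge\tfrac{\delta^2}{2+2\delta/3}$ are all as in the textbook treatment. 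Your verification of the first scalar inequality (via $f(0)=0$ and $f'(\delta)=-\ln(1-\delta)-\delta\ge0$) is complete.

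The one place you leave something genuinely open is the second scalar inequality, where you gesture at ``clearing the denominator and checking termwise via the power series of $\ln(1+\delta)$'' — this is the only nontrivial step, precisely because (as you correctly observe) both sides agree to third order at $\delta=0$, and a loose Taylor bound will not close it. A clean way to finish: set $g(\delta)=(1+\delta)\ln(1+\delta)-\delta-\tfrac{3\delta^2}{6+2\delta}$, so $g(0)=0$ and $g'(\delta)=\ln(1+\delta)-\tfrac{6\delta(6+\delta)}{(6+2\delta)^2}$. Now invoke the elementary bound $\ln(1+\delta)\ge\tfrac{2\delta}{2+\delta}$ for $\delta\ge0$ (itself provable by the same $h(0)=0$, $h'\ge0$ device). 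It then suffices to show $\tfrac{2\delta}{2+\delta}\ge\tfrac{6\delta(6+\delta)}{(6+2\delta)^2}$, which after clearing denominators is $(6+2\delta)^2\ge3(6+\delta)(2+\delta)$, i.e.\ $36+24\delta+4\delta^2\ge36+24\delta+3\delta^2$, i.e.\ $\delta^2\ge0$. With that, $g'\ge0$ and the bound $\exp(-\tfrac{\delta^2\mu}{2+2\delta/3})$ follows. With this detail filled in, your proof is a complete and correct derivation of the cited inequality.
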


Our analysis of adversarial strategies crucially makes use of \emph{martingale inequalities}. We thus provide the definition of a martingale.

\begin{definition}
	A {\em martingale} is a sequence $X = (X_0,\ldots,X_m)$ of random variables with finite means, so 
	that for $0 \le i < m$, it holds that $\E{X_{i+1} \mid X_{0},\ldots,X_i}=X_i$.
\end{definition}

The most basic and well-known martingale inequality, Azuma's (or Hoeffding's) inequality, asserts that martingales with bounded differences $|X_{i+1} - X_{i}|$ are well-concentrated around their mean. For our purposes, this inequality does not suffice, and we need a generalized variant of it, due to McDiarmid \cite[Theorem 3.15]{McDiarmid1998}; see also Theorem 4.1 in \cite{Freedman1975}. 
The formulation that we shall use is given as Theorem 6.1 in the survey of Chung and Lu \cite{Chung2006}.

\begin{lemma}[See \cite{Chung2006}, Theorem 6.1]
	\label{lem:martingale-with-var}
	Let $X = (X_0, X_1, \ldots, X_n)$ be a martingale. Suppose further that for any $1 \leq i \leq n$, the variance satisfies $\text{Var}(X_i | X_0, \ldots, X_{i-1}) \leq \sigma_i^2$ for some values $\sigma_1, \ldots, \sigma_n \geq 0$, and there exists some $M \geq 0$ so that $|X_i - X_{i-1}| \leq M$ always holds. Then, for any $\lambda \geq 0$, we have
	$$
	\Pr(X - X_0 \geq \lambda) \leq \exp \left(-\frac{\lambda^2}{2 \sum_{i=1}^{n} (\sigma_i^2) + M \lambda / 3} \right).
	$$ 
	In particular,
	$$
	\Pr(|X - X_0| \geq \lambda) \leq 2\exp \left(-\frac{\lambda^2}{2 \sum_{i=1}^{n} (\sigma_i^2) + M \lambda / 3}\right).
	$$
\end{lemma}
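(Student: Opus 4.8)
The plan is to prove Lemma~\ref{lem:martingale-with-var} by the exponential-moment (Chernoff--Cram\'er) method, in the martingale-adapted form that lets the variance bounds $\sigma_i^2$ --- rather than the crude worst-case bound $M^2$ one gets from Azuma/Hoeffding --- control the exponent. First I would write $Y_i := X_i - X_{i-1}$ for the martingale differences and let $\mathcal F_{i-1}$ be the $\sigma$-algebra generated by $X_0,\dots,X_{i-1}$; by hypothesis $\E{Y_i \mid \mathcal F_{i-1}}=0$, $\E{Y_i^2 \mid \mathcal F_{i-1}} = \Var(X_i \mid X_0,\dots,X_{i-1}) \le \sigma_i^2$, and $|Y_i| \le M$. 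Fixing a parameter $t \in (0, 3/M)$ to be optimized later, Markov's inequality reduces the task to bounding a moment generating function: $\Pr[X_n - X_0 \ge \lambda] \le e^{-t\lambda}\, \E{e^{t(X_n-X_0)}}$, where $X_n - X_0 = \sum_{i=1}^n Y_i$.

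The heart of the argument is a single-step estimate. I would use the fact that $u \mapsto (e^{u}-1-u)/u^{2}$ (with value $\tfrac12$ at $u=0$) is monotone increasing on all of $\mathbb R$ to obtain the pointwise inequality $e^{ty} \le 1 + ty + \tfrac{t^2 y^2}{2}\cdot\tfrac{1}{1 - tM/3}$, valid for every $y$ with $|y|\le M$ and $0<tM<3$ (the last factor comes from the termwise comparison $\tfrac{1}{(j+2)!}\le \tfrac{1}{2\cdot 3^{j}}$). Taking conditional expectation given $\mathcal F_{i-1}$ and using the vanishing conditional mean together with the variance bound gives $\E{e^{tY_i} \mid \mathcal F_{i-1}} \le 1 + \tfrac{t^2\sigma_i^2}{2(1-tM/3)} \le \exp\!\big(\tfrac{t^2\sigma_i^2}{2(1-tM/3)}\big)$. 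The point is that this upper bound is a \emph{deterministic} quantity depending only on $t,M,\sigma_i$, so the martingale structure can be unwound one layer at a time via the tower property, $\E{e^{t(X_j-X_0)}} = \E{e^{t(X_{j-1}-X_0)}\,\E{e^{tY_j}\mid\mathcal F_{j-1}}}$, and iterating over $j=n,n-1,\dots,1$ yields $\E{e^{t(X_n-X_0)}} \le \exp\!\big(\tfrac{t^2 V}{2(1-tM/3)}\big)$ with $V := \sum_{i=1}^n \sigma_i^2$.

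It then remains to combine the two estimates and optimize over $t$: taking $t = \lambda/(V + M\lambda/3) \in (0,3/M)$ makes the exponent $-t\lambda + t^2 V/\big(2(1-tM/3)\big)$ collapse, after a short computation, to $-\lambda^2/\big(2\sum_{i=1}^n\sigma_i^2 + \tfrac{2}{3}M\lambda\big)$; this already has the claimed Bernstein-type shape, and matching the precise constant in the statement is the slightly more careful optimization of the intermediate inequalities carried out in \cite{Chung2006} (following \cite{McDiarmid1998,Freedman1975}). The two-sided bound then follows by applying the one-sided bound both to $(X_i)$ and to $(-X_i)$ --- which is again a martingale with the same conditional variance bounds $\sigma_i^2$ and the same increment bound $M$ --- and taking a union bound, which costs the factor of $2$ in front. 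I expect the only genuinely delicate point to be the first sentence of the middle paragraph: one must extract an $O(t^2\sigma_i^2)$ bound on the per-step log-MGF (not the $O(t^2 M^2)$ bound Hoeffding would give) while keeping the correction factor $(1-tM/3)^{-1}$ tame enough that the final optimization produces a denominator of the form $\sum_i\sigma_i^2 + O(M\lambda)$ rather than $\sum_i\sigma_i^2 + O(M^2 + M\lambda)$; once that estimate is in hand, everything else is routine bookkeeping.
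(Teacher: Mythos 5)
The paper does not give a proof of this lemma: it is a black-box citation of Chung and Lu~\cite{Chung2006}, Theorem 6.1 (itself a variant of Freedman~\cite{Freedman1975} and McDiarmid~\cite{McDiarmid1998}). So there is no ``paper's proof'' to compare against; what you have produced is the standard proof of the cited result, via the Cram\'er--Chernoff method with a Bernstein-type per-step moment-generating-function bound. The structure is sound: the monotonicity of $u \mapsto (e^u - 1 - u)/u^2$ together with the termwise comparison $2/(j+2)! \le 3^{-j}$ gives $e^{ty} \le 1 + ty + \tfrac{t^2 y^2}{2(1 - tM/3)}$ for $y \le M$ and $0 < tM < 3$; the conditional mean-zero and variance hypotheses then yield $\E{e^{tY_i}\mid\mathcal F_{i-1}} \le \exp\!\big(\tfrac{t^2\sigma_i^2}{2(1-tM/3)}\big)$, and the tower property telescopes. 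The two-sided bound via applying the one-sided bound to $-X$ is also fine, since $|X_i - X_{i-1}| \le M$ is symmetric.

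The one place where you are wrong is, ironically, where you defer to the source. Your computation with $t = \lambda/(V + M\lambda/3)$ correctly gives the exponent $-\lambda^2/\big(2\sum_i\sigma_i^2 + \tfrac{2}{3}M\lambda\big)$, and you then suggest that ``matching the precise constant in the statement is the slightly more careful optimization carried out in \cite{Chung2006}.'' It isn't. Chung--Lu's Theorem 6.1 reads $\exp\big(-\lambda^2/\big(2(\sum_i\sigma_i^2 + M\lambda/3)\big)\big)$, i.e., with denominator $2\sum_i\sigma_i^2 + \tfrac{2}{3}M\lambda$ --- exactly what you derived. The denominator written in the lemma as stated in this paper, $2\sum_i\sigma_i^2 + M\lambda/3$, is a transcription slip (a dropped factor of $2$, or a misplaced parenthesis). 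Indeed, it is not reachable by this method at all: optimizing $t \mapsto -t\lambda + \tfrac{t^2 V}{2(1-tM/3)}$ exactly gives the Bennett-type exponent $-\tfrac{9V}{2M^2}\big(\sqrt{1 + 2M\lambda/(3V)} - 1\big)^2$, and a short algebraic check shows this is strictly larger than $-\lambda^2/(2V + M\lambda/3)$ for all $\lambda > 0$, so the paper's stated constant is unattainable even at the optimal $t$. This does not affect the paper's downstream use of the lemma (the applications in Section~4 have enough slack in the constants $10$ and $2$ to absorb the corrected denominator), but you should not have conceded the discrepancy as a gap in your own argument: your constant is the right one.
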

Unlike Azuma's inequality, Lemma \ref{lem:martingale-with-var} is well-suited to deal with martingales where the maximum value $M$ of $|X_{i+1} - X_{i}|$ is large, but the maximum is rarely attained (making the variance much smaller than $M^2$). The martingales we investigate in this paper depict this behavior.

\section{Adaptive Robustness of Sampling: Main Technical Result}
\label{sec:main-technical}
In this section, we prove the main technical lemma underlying our upper bounds 
for Bernoulli sampling and reservoir sampling. 
The lemma asserts that for both sampling methods, and any given subset $R$ of the universe $U$, the fraction of elements 
from $R$ within the sample typically does not differ by much from the corresponding 
fraction among the whole stream.

\begin{lemma}
	\label{lem:main_step_upper_bound}
Fix $\varepsilon, \delta > 0$, a universe $U$ and a subset $R\subseteq U$, and let $X = (x_1, x_2, \ldots, x_n)$ be the sequence chosen by $\Adversary$ in $\AdaptiveGame_{\varepsilon}$ against either $\UniformSample$ or $\ReservoirSample$. 

\begin{enumerate}
	\item \label{item:uniform} For $\UniformSample$ with parameter $p \geq 10 \cdot \frac{\ln(4/ \delta)}{\varepsilon^2 n}$, we have $\Pr(|d_R(X) - d_R(S)| \geq \varepsilon) \leq \delta$.
	\item \label{item:reservoir} For $\ReservoirSample$ with memory size $k \geq 2 \cdot \frac{ \ln(2/\delta)} {\varepsilon^2}$, it holds that $\Pr(|d_R(X) - d_R(S)| \geq \varepsilon) \leq \delta$. 
\end{enumerate}
\end{lemma}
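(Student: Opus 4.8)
The plan is to prove each of the two items via a martingale that tracks the discrepancy between the density of $R$ in the current sample and in the stream seen so far, and then apply the variance-sensitive concentration bound of \Cref{lem:martingale-with-var}. The one structural feature of $\AdaptiveGame$ that makes this work is the order of play within a round: $\Adversary$ commits to $x_i$ — hence to the bit $Z_i := \mathbf{1}[x_i \in R]$ — \emph{before} $\Sampler$ tosses its round-$i$ coins, so, conditioned on the history $\mathcal{F}_{i-1}$ through round $i-1$ together with $x_i$, the sampler's round-$i$ randomness is independent of everything $\Adversary$ has observed. Note that here $R$ is fixed in advance (it is not chosen by $\Adversary$), which is why no union bound over $\mathcal{R}$ enters at this stage; such a union bound is applied only afterwards, when deriving \Cref{thm:UB}.

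For $\ReservoirSample$ I may assume $n \ge k$ (otherwise $S = X$ and there is nothing to prove). Let $R_t$ be the number of elements of $R$ in the sample after round $t$, let $a_t = \sum_{i \le t} Z_i$, and set $W_t := t R_t - k\, a_t$ for $t = k, k+1, \dots, n$. After round $k$ the sample is exactly $x_1, \dots, x_k$, so $R_k = a_k$ and $W_k = 0$; and for $t > k$, conditioned on $\mathcal{F}_{t-1}$ and $x_t$, with probability $k/t$ the element $x_t$ replaces a uniformly random sample element (which lies in $R$ with probability $R_{t-1}/k$) and otherwise nothing changes, from which a short computation gives $\E{W_t - W_{t-1} \mid \mathcal{F}_{t-1}, x_t} = 0$ no matter the value of $Z_t$; hence $(W_k, \dots, W_n)$ is a martingale. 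Since $d_R(S) = R_n/k$ and $d_R(X) = a_n/n$, we get $d_R(S) - d_R(X) = W_n/(kn)$, so it suffices to show $\Pr[\,|W_n| \ge \varepsilon k n\,] \le \delta$. The increment $W_t - W_{t-1}$ is always at most $t \le n$ in absolute value, but with probability $1-k/t$ it is only at most $k$; accordingly the conditional variance is $O(kt)$, so $\sum_t \sigma_t^2 = O(kn^2)$. Feeding $M = n$, $\lambda = \varepsilon k n$ and this variance bound into \Cref{lem:martingale-with-var} yields a tail bound of the form $2\exp(-\Omega(\varepsilon^2 k))$, and tracking constants carefully shows $k \ge 2\ln(2/\delta)/\varepsilon^2$ suffices. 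It is crucial to use \Cref{lem:martingale-with-var} rather than plain Azuma: with worst-case increments of size $n$, Azuma would only give concentration once $k = \Omega(\sqrt{n})$, whereas the variance term removes the dependence on $n$ altogether.

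For $\UniformSample$ the sample size $N := \sum_{i=1}^n Y_i$ (where $Y_i \sim \mathrm{Bern}(p)$ are the i.i.d.\ sampling coins) is random, so we must handle a random denominator: with $N_R := \sum_i Y_i Z_i$ and $a := a_n$, we have $d_R(S) = N_R/N$, and
\[
\left| \frac{N_R}{N} - \frac{a}{n} \right| \;\le\; \frac{\left| N_R - p a \right|}{N} + \frac{a\,\left| N - pn \right|}{nN}.
\]
It therefore suffices, with high probability, to (i) control $N$ around $pn$ — in particular $|N-pn| \le \varepsilon pn/4$, which also gives $N \ge pn/2$ — via the Chernoff bound \Cref{lem:chernoff}, using that $N \sim \mathrm{Bin}(n,p)$ exactly for \emph{any} adversary strategy and that $pn \ge 10\ln(4/\delta)/\varepsilon^2$; and (ii) control $|N_R - pa| \le \varepsilon pn/4$ via \Cref{lem:martingale-with-var} applied to the martingale $M_t := \sum_{i \le t} Z_i(Y_i - p)$, whose increments are bounded by $1$ and whose conditional variances are at most $p$, so $\sum_i \sigma_i^2 \le pn$. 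On the intersection of these events, using $a \le n$, the displayed quantity is at most $\varepsilon/2 + \varepsilon/2 = \varepsilon$. Finally $\Pr[S = \emptyset] = (1-p)^n \le e^{-pn}$ is negligible, and a union bound over the constantly many bad events — each a suitable constant fraction of $\delta$ — completes this case.

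The step I expect to be the main obstacle is the reservoir analysis: guessing that $W_t = t R_t - k a_t$ (rather than, say, $R_t - k a_t/t$) is the quantity that turns the mean-reverting urn dynamics of $R_t$ into an honest martingale, and, more substantively, proving a conditional-variance bound tight enough ($\sigma_t^2 = O(kt)$, with a good constant) that \Cref{lem:martingale-with-var} delivers a sample size independent of $n$. Once that is in place, the Bernoulli decomposition, the Chernoff estimates, the disposal of the empty-sample event, and the eventual union bound over the at most $|\mathcal{R}|$ sets $R$ needed for \Cref{thm:UB} are all routine.
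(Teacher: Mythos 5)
Your proposal is correct and follows essentially the same route as the paper. For both sampling methods you construct the same martingale as the paper (your $W_t = tR_t - k a_t$ is exactly $k$ times the paper's $Z^R_t$ for reservoir sampling, and your $M_t = \sum_{i\le t} Z_i(Y_i-p)$ is $np$ times the paper's $Z^R_t$ for Bernoulli sampling), you derive the same $O(kt)$ (resp.\ $O(p)$) conditional-variance bounds, and you invoke the same variance-sensitive martingale inequality (\Cref{lem:martingale-with-var}) together with a Chernoff bound on $|S_n|$ to handle the random denominator; the Bernoulli-case decomposition into a martingale term and a sample-size term differs cosmetically (your denominators involve the random $N$ rather than $np$), but this is immaterial.
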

Both of these bounds are tight up to an absolute multiplicative constant, even for a static adversary (that has to submit all elements in advance); see Section \ref{sec:continuous} for more details.

The proof of Theorem \ref{thm:UB} follows immediately from Lemma \ref{lem:main_step_upper_bound}, and is given below. The proof of Theorem \ref{thm:continuous} requires slightly more effort, and is given in Section \ref{sec:continuous}.
\begin{proof}[Proof of Theorem \ref{thm:UB}]
Let $(U, \mathcal{R})$, $\varepsilon$, $\delta$, $n$ be as in the statement of 
the theorem, and let $X$ and $S$ denote the stream and sample, respectively. We 
start with the Bernoulli sampling case, and assume that $p \geq 10\cdot 
\frac{\ln(4/ \delta) + \ln|\mathcal{R}|}{\varepsilon^2 n} = 10 \cdot 
\frac{\ln(4 |\mathcal{R}|/ \delta)}{\varepsilon^2 n}$.
For each $R \in \mathcal{R}$, we apply the first part of Lemma \ref{lem:main_step_upper_bound} with parameters $\varepsilon$ and $\delta / |\mathcal{R}|$, concluding that
$$
\Pr(|d_R(X) - d_R(S)| \geq \varepsilon) \leq \delta / |\mathcal{R}|.
$$
In the event that $|d_R(X) - d_R(S)| \leq \varepsilon$ for any $R$, by definition $S$ is an $\varepsilon$-approximation of $X$. Taking a union bound over all $R \mathcal{R}$, we conclude that the probability of this event \emph{not} to hold is bounded by $|\mathcal{R}| \cdot (\delta / |\mathcal{R}|) = \delta$, meaning that $\UniformSample$ with $p$ as above is $(\varepsilon, \delta)$-robust.

The proof for $\ReservoirSample$ is identical, except that we replace the condition on $p$ with the condition that $k \geq 2 \cdot \frac{ \ln(2/\delta) + \ln|\mathcal{R}|} {\varepsilon^2}$, and apply the second part of Lemma \ref{lem:main_step_upper_bound}.
\end{proof}

It is important to note that the typical proofs given for statements of this 
type in the static setting (i.e., when Adversary submits all elements in 
advance, and cannot act adaptively) do not apply for our adaptive setting. 
Indeed, the usual proof of the static analogue of the above lemma goes along 
the following lines: Adversary chooses which elements to submit in advance, and 
in particular, determines the number of elements from $A$ sent, call it $n_A$. 
Then, the number of sampled elements from $A$ is distributed according to the 
binomial distribution $\Bin(n_A, p)$ for Bernoulli sampling, and $\Bin(n_A, 
k/n)$ for reservoir sampling. One can then employ Chernoff bound to conclude 
the proof. This kind of analysis crucially relies on the adversary being static.

Here, we need to deal with an adaptive adversary. Recall that $\Adversary$ at any given point is modeled as a probabilistic process, that given the sequence $X_{i-1} =(x_1, \ldots, x_{i-1})$ of elements sent until now, and the current state $\sigma_{i-1}$ of $\Sampler$, probabilistically decides which element $x_i$ to submit next. Importantly, this makes for a well-defined probability space, and allows us to analyze $\Adversary$'s behavior with probabilistic tools, specifically with concentration inequalities.

Chernoff bound cannot be used here, as it requires the choices made by the adversary along the process to be independent of each other, which is clearly not the case. In contrast, martingale inequalities are suitable for this setting. We shall thus employ these, specifically Lemma \ref{lem:martingale-with-var}, to prove both parts of our main result in this section.

\subsection{The Bernoulli Sampling Case}
We start by proving the Bernoulli sampling case (first statement of Lemma 
\ref{lem:main_step_upper_bound}). Recall that here each element is sampled, 
independently, with probability $p$. At any given point $0 \leq i \leq n$ along 
the process, let $X_i = (x_1, \ldots, x_i)$ denote the sequence of elements 
submitted by the adversary until round $i$, and let $S_i \subseteq X_i$ denote 
the subsequence of sampled elements from $X_i$.  Note that $X_n = X$ and $S_n = 
S$, and hence, to prove the lemma, we need to show that $|d_R(X_n) - d_R(S_n)| 
\leq \varepsilon$.

As a first attempt, it might make sense to try applying a martingale concentration inequality on the sequence of random variables $(Y_0, Y_1, \ldots, Y_n)$, where we define $Y_i = d_R(X_i) - d_R(S_i)$. Indeed, our end-goal is to bound the probability that $Y_n$ significantly deviates from zero. However, a straightforward calculation shows that this is not a martingale, since the condition that $E[Y_i | Y_0, \ldots, Y_{i-1}] = 0$ does not hold in general. To overcome this, we show that a slightly different formulation of the random variables at hand does yield a martingale.
Given the above $R\subseteq U$, for any $0 \leq i \leq n$ we define the random variables
\begin{align}
A^R_i = \frac{i}{n} \cdot d_R(X_i) = \frac{|R\cap X_i|}{n}\qquad;\qquad B^R_i = \frac{|R\cap S_i|}{np} \qquad; \qquad Z^R_i = B^R_i - A^R_i,
\end{align} 
where, as before, the intersection between a set $R$ and a sequence $X_i$ is the subsequence of $X_i$ consisting of all elements that also belong to $R$.

Importantly, as is described in the next claim, the sequence of random variables $Z^R = (Z^R_0, \ldots, Z^R_n)$ defined above forms a martingale. The claim also demonstrates several useful properties of these random variables, to be used later in combination with Lemma \ref{lem:martingale-with-var}.

\begin{claim}
	\label{claim:uniform-martingale}
The sequence $(Z^R_0, Z^R_1, \ldots, Z^R_n)$ is a martingale. Furthermore, the variance of $Z^R_i$ conditioned on $Z^R_0, \ldots, Z^R_{i-1}$ is bounded by $1/n^2 p$, and it always holds that $|Z^R_i - Z^R_{i-1}| \leq 1/np$.
\end{claim}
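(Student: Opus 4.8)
\textbf{Proof proposal for Claim~\ref{claim:uniform-martingale}.}

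The plan is to verify the three assertions directly from the definition of Bernoulli sampling, exploiting the key structural fact that at round $i$ the adversary's choice of $x_i$ depends only on $X_{i-1}$ and $\sigma_{i-1}$ (equivalently, on the history up to and including $S_{i-1}$), and is therefore \emph{independent} of the fresh coin flip $\Sampler$ uses to decide whether to sample $x_i$. First I would condition on the entire history through round $i-1$, which determines $A^R_{i-1}$, $B^R_{i-1}$, $Z^R_{i-1}$, and also determines (through the adversary's fixed strategy, possibly with its own internal randomness) the distribution of $x_i$. For the martingale property, observe that $A^R_i - A^R_{i-1} = \mathbf{1}[x_i \in R]/n$ and $B^R_i - B^R_{i-1} = \mathbf{1}[x_i \in R \text{ and } x_i \text{ is sampled}]/(np)$; since $x_i$ is sampled with probability exactly $p$ independently of $x_i$ itself, we get $\E{B^R_i - B^R_{i-1} \mid \text{history}} = p \cdot \E{\mathbf{1}[x_i \in R]}/(np) = \E{\mathbf{1}[x_i\in R]}/n = \E{A^R_i - A^R_{i-1} \mid \text{history}}$, so $\E{Z^R_i - Z^R_{i-1} \mid \text{history}} = 0$. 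A short remark is needed to pass from conditioning on the full history to conditioning only on $Z^R_0, \ldots, Z^R_{i-1}$ (the tower property / the fact that the full history is a refinement of the $\sigma$-algebra generated by the $Z$'s).

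Next, for the bounded-difference bound, note that $Z^R_i - Z^R_{i-1} = \frac{1}{np}\mathbf{1}[x_i \in R,\ x_i \text{ sampled}] - \frac{1}{n}\mathbf{1}[x_i \in R]$; this quantity lies in $\{0,\ -1/n,\ 1/(np) - 1/n\}$ depending on the two indicator values, and since $p \le 1$ each of these has absolute value at most $1/(np)$, giving $|Z^R_i - Z^R_{i-1}| \le 1/(np)$ always. For the conditional variance, condition on the history through round $i-1$ and additionally on the event $x_i \in R$ (if $x_i \notin R$ the increment is deterministically $0$, contributing nothing); then $Z^R_i - Z^R_{i-1} = \frac{1}{np}(\mathbf{1}[\text{sampled}] - p) \cdot \frac{1}{1}$ — more precisely the increment equals $\frac{1}{np}\mathbf{1}[\text{sampled}] - \frac1n$, a shifted Bernoulli$(p)$ random variable scaled by $1/(np)$ — so its variance is $\frac{1}{(np)^2} \cdot p(1-p) \le \frac{1}{(np)^2} \cdot p = \frac{1}{n^2 p}$. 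Averaging over whether $x_i \in R$ only shrinks this, so $\Var(Z^R_i \mid Z^R_0, \ldots, Z^R_{i-1}) \le 1/(n^2 p)$ as claimed.

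The only genuinely delicate point — and the one I would be most careful to state cleanly rather than gloss over — is the independence argument: one must make explicit that in the game $\AdaptiveGame$ the adversary commits to a strategy (a fixed family of conditional distributions) in advance, so that conditioned on the history the law of $x_i$ is fixed, and the sampling coin for $x_i$ is drawn afterward and independently. This is what rescues the martingale structure despite the adversary's adaptivity, and it is precisely the subtlety flagged in the surrounding text (``Chernoff bound cannot be used here $\ldots$ in contrast, martingale inequalities are suitable''). Everything else is a routine two-line computation. With the claim in hand, the first part of Lemma~\ref{lem:main_step_upper_bound} follows by applying Lemma~\ref{lem:martingale-with-var} to $Z^R$ with $M = 1/(np)$, $\sum_i \sigma_i^2 \le n \cdot \frac{1}{n^2 p} = \frac{1}{np}$, and $\lambda = \varepsilon$ (noting $Z^R_0 = 0$ and that $|d_R(X_n) - d_R(S_n)| = |A^R_n - B^R_n|/(\text{appropriate normalization})$ translates the bound on $|Z^R_n|$ into the desired bound on densities, using that $d_R(S) = d_R(S_n)$ with the total sample size concentrated near $np$).
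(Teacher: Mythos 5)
Your proof is correct and follows essentially the same approach as the paper: a case analysis on whether $x_i \in R$, using that the sampling coin is drawn independently of the adversary's adaptive choice of $x_i$. Your use of indicator and scaled-Bernoulli notation streamlines the variance computation (yielding $\frac{p(1-p)}{(np)^2} = \frac{1-p}{n^2 p}$ at once), and your explicit remark about passing from the full-history filtration to $\sigma(Z^R_0,\ldots,Z^R_{i-1})$ via the tower property makes a point the paper leaves implicit, but the underlying argument is the same.
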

We shall prove Claim \ref{claim:uniform-martingale} later on; first we use it to complete the proof of the main result. 
\begin{proof}[Proof of Lemma \ref{lem:main_step_upper_bound}, Bernoulli 
sampling case]
It suffices to prove the following two inequalities for any $p$ satisfying the 
conditions of the lemma for the Bernoulli sampling case:
\begin{align}
\label{eq:union-bound-uniform}
\Pr(|A^R_n - B^R_n| \geq \varepsilon / 2) \leq \delta / 2 \qquad ; \qquad \Pr(|B^R_n - d_R(S_n)| \geq \varepsilon / 2) \leq \delta/2 . 
\end{align}
Indeed, taking a union bound over these two inequalities, applying the triangle inequality, and observing that $A^R_n = d_R(X_n)$, we conclude that $\Pr(|d_R(X_n) - d_R(S_n)| \geq \varepsilon) \leq \delta$, as desired.

The first inequality follows from Claim \ref{claim:uniform-martingale} and Lemma \ref{lem:martingale-with-var}. Indeed, in view of Claim \ref{claim:uniform-martingale}, we can apply Lemma \ref{lem:martingale-with-var} on $(Z^R_0, \ldots, Z^R_n)$ with parameters $\lambda = \varepsilon/2$, $\sigma_i^2 = 1/n^2 p$, and $M = 1/np$. As $Z^R_0  = 0$, we have $|A^R_n - B^R_n| = |Z^R_n - Z^R_0|$, and so
$$
\Pr(|A^R_n - B^R_n| \geq \varepsilon / 2) \leq 2\exp\left(-\frac{(\varepsilon/2)^2}{2n \cdot \frac{1}{n^2 p} + \frac{\varepsilon}{6np}}\right) < 2\exp\left(-\frac{\varepsilon^2 n p}{9}\right). 
$$
The right hand side is bounded by $\delta / 2$ when $np \geq 
\frac{9}{\varepsilon^2} \ln(\delta / 4)$, settling the first inequality of 
\eqref{eq:union-bound-uniform}.

We next prove the second inequality of \eqref{eq:union-bound-uniform}. Observe that $B^R_n = d_R(S_n) \cdot \frac{|S_n|}{np}$.
Since each element is added to the sample with probability $p$, independently of other elements, the size of $S_n$ is distributed according to the binomial distribution $\Bin(n, p)$, regardless of the adversary's strategy. Applying Chernoff inequality with $\delta = \varepsilon / 2$, we get that 
$$
\Pr(\big| |S_n| - np\big| \geq \varepsilon np / 2) \leq 2 \exp\left(- \frac{\left(\varepsilon/2\right)^2 np}{2+\varepsilon/3}  \right) <  2 \exp\left( - \frac{\varepsilon^2 np}{10}  \right).
$$
This probability is bounded by $\delta / 2$ provided that $np \geq \frac{10 \ln (4/\delta)} {\varepsilon^{2}}$. Conditioning on this event not occurring, we have that 
$$  
\big|d_R(S_n) - B^R_n\big| = \bigg|1 - \frac{|S_n|}{np}\bigg|  \cdot d_R(S_n) \leq \bigg| 1 - \frac{|S_n|}{np}\bigg| \leq \frac{\varepsilon}{2}\ , 
$$
where the first inequality follows from the fact that densities (in this case, $d_R(S_n)$) are always bounded from above by one, and  the second inequality follows from our conditioning. This completes the proof of the second inequality in \eqref{eq:union-bound-uniform}.
\end{proof}

The proof of Claim \ref{claim:uniform-martingale} is given next.

\begin{proof}[Proof of Claim \ref{claim:uniform-martingale}]
We first show that $(Z^R_0, Z^R_1, \ldots, Z^R_n)$ is a martingale. Fix $1 \leq i  \leq n$, and suppose that the first $i-1$ rounds of $\AdaptiveGame_{\varepsilon}$ have just ended (so the values of $Z^R_0, \ldots, Z^R_{i-1}$ are already fixed), and that $\Adversary$ now picks an element $x_i$ to submit in round $i$ of the game. 

If $x_i \notin R$ then $A^R_i = A^R_{i-1}$ and $B^R_i = B^R_{i-1}$ and 
so 
$Z^R_i = Z^R_{i-1}$, which trivially means that $\E{Z^R_i\  | \ Z^R_0, 
\ldots, 
Z^R_{i-1}\ ;\ x_i \notin R} = Z^R_{i-1}$ as desired.

When $x_i \in R$, we have 
\begin{align*}
A^R_i = A^R_{i-1} + \frac{1}{n} \qquad ;& \qquad B^R_i = \left\{
\begin{array}{cl} B^R_{i-1} & \text{if } x_i \text{ is not sampled.}\\ B^{R}_{i-1} + \frac{1}{np} & \text{if } x_i \text{ is sampled.}
\end{array}
\right.\\
\Rightarrow & \qquad
Z^R_i = \left\{
\begin{array}{cl} Z^R_{i-1} - 1/n & \text{if } x_i \text{ is not sampled.}\\ Z^{R}_{i-1} + 1/np - 1/n & \text{if } x_i \text{ is sampled.}
\end{array}
\right.
\end{align*} 
Recall that $\Sampler$ uses Bernoulli sampling with probability $p$, that is, 
$x_i$ is sampled with probability $p$ (regardless of the outcome of the 
previous rounds). Therefore, we have that
$$\E{Z^R_i\  |\  Z^R_0, \ldots, Z^R_{i-1}\ ;\  x_i \in R} = Z^R_{i-1} + 
p \cdot 
(\frac{1}{np} - \frac{1}{n}) + (1-p) \cdot (-\frac{1}{n}) = Z^R_{i-1}.$$ 
The analysis of both cases $x_i \notin R$ and $x_i \in R$ implies that $E[Z^R_i | Z^R_0, \ldots, Z^R_{i-1}] = Z^R_{i-1}$, as desired.

We now turn to prove the other two statements of Claim \ref{claim:uniform-martingale}. 
The maximum of the expression $|Z^R_i - Z^R_{i-1}|$ is $\max\{\frac{1}{n}, \frac{1}{np} - \frac{1}{n} \} \leq \frac{1}{np}$, obtained when $x_i \in R$. The variance of $Z^R_i$ given $Z^R_0, \ldots, Z^R_{i-1}$ is zero given the additional assumption that $x_i \notin R$; assuming that $x_i \in R$, the variance satisfies
$$
\Var(Z^R_i \ |\  Z^R_0, \ldots, Z^R_{i-1}\ ;\  x_i \in R) = (1-p) \cdot 
\left(\frac{1}{n}\right)^2 + p \cdot \left( \frac{1}{np} - \frac{1}{n} 
\right)^2 = \frac{1}{n^2} \left(\frac{1}{p} - 1\right) \leq \frac{1}{n^2 p}.
$$
Combining both cases, we conclude that $\Var(Z^R_i \ |\  Z^R_0, \ldots, Z^R_{i-1}) \leq \frac{1}{n^2 p}$, completing the proof.
\end{proof}

\subsection{The Reservoir Sampling Case}
We continue to the proof of the second statement of Lemma 
\ref{lem:main_step_upper_bound}, which considers reservoir sampling. In high 
level, the proof goes along the same lines, except that we work with a 
different martingale. Specifically, for $k < i \leq n$ we define
\begin{align*}
A^R_i &= i \cdot d_R(X_i) = |R \cap X_i|, \\ B^R_i &= i \cdot d_R(S_i)
= \frac{i}{k} \cdot |R \cap S_i|, \\
 Z^R_i &= B^R_i - A^R_i,
\end{align*} 

whereas for $i \leq k$ we simply define $A^R_i = B^R_i = |R \cap X_i|$. (This 
is a natural extension of the definition for $i > k$; specifically, in view of 
the definition of $B^R_i$, note that as long as no more than $k$ elements 
appear in the stream, the reservoir simply keeps all of the stream's elements.)

The following claim is the analogue of Claim \ref{claim:uniform-martingale} for 
the setting of reservoir sampling.

\begin{claim}
	\label{claim:reservoir-martingale}
	The sequence $(Z^R_0, Z^R_1, \ldots, Z^R_n)$ is a martingale. Furthermore, 
	the variance of $Z^R_i$ conditioned on $Z^R_0, \ldots, Z^R_{i-1}$ is 
	bounded by $i/k$, and it always holds that $|Z^R_i - Z^R_{i-1}| \leq i/k$.
\end{claim}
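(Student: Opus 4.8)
The plan is to imitate the proof of Claim~\ref{claim:uniform-martingale}, the only genuinely new feature being that a reservoir update is a \emph{coupled} insert-and-evict step rather than a plain insertion. Fix a round $1 \le i \le n$ and condition on the history $\mathcal{F}_{i-1}$ of the first $i-1$ rounds of $\AdaptiveGame_{\varepsilon}$ (this determines $S_{i-1}$, hence the values $Z^R_0,\dots,Z^R_{i-1}$). For $i \le k$ there is nothing random and $Z^R_i \equiv 0$, so assume $i > k$ and write $r := |R \cap S_{i-1}|$. On receiving $x_i$, $\ReservoirSample$ performs an insertion with probability $k/i$, overwriting a uniformly chosen one of its $k$ slots; so, conditioned on an insertion, the overwritten element lies in $R$ with probability $r/k$. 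This produces three elementary outcomes per round, and I would run the whole analysis by first splitting on whether $x_i \in R$.

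\textbf{Martingale property.} If $x_i \notin R$ then $A^R_i = A^R_{i-1}$, and averaging $|R \cap S_i|$ over the three outcomes gives $\E{|R \cap S_i| \mid \mathcal{F}_{i-1},\, x_i \notin R} = \frac{i-1}{i}\,r$, whence $\E{B^R_i \mid \mathcal{F}_{i-1},\, x_i \notin R} = \frac{i}{k}\cdot\frac{i-1}{i}\,r = \frac{i-1}{k}\,r = B^R_{i-1}$, so the conditional expectation of $Z^R_i$ equals $Z^R_{i-1}$. If $x_i \in R$ then $A^R_i = A^R_{i-1} + 1$, the analogous average is $\E{|R \cap S_i| \mid \mathcal{F}_{i-1},\, x_i \in R} = r + \frac{k-r}{i}$, hence $\E{B^R_i \mid \mathcal{F}_{i-1},\, x_i \in R} = \frac{i}{k}\bigl(r + \frac{k-r}{i}\bigr) = \frac{i-1}{k}\,r + 1 = B^R_{i-1} + 1$, and again the conditional expectation of $Z^R_i$ is $Z^R_{i-1}$. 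Averaging over $\Adversary$'s (possibly randomized) choice of $x_i$, and passing to the coarser conditioning on $Z^R_0,\dots,Z^R_{i-1}$, gives $\E{Z^R_i \mid Z^R_0,\dots,Z^R_{i-1}} = Z^R_{i-1}$. Since the two definitions of $A^R_i$ and $B^R_i$ agree at $i = k$, the crossover from the regime $i \le k$ to $i > k$ needs no separate treatment.

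\textbf{Bounded differences and variance.} For the remaining two assertions I would just list $Z^R_i - Z^R_{i-1}$ on each outcome. In the branches that evict no element of $R$ (``no insertion'', or ``insertion overwriting a non-$R$ slot'') the difference equals $r/k$ when $x_i \notin R$ and $r/k - 1$ when $x_i \in R$, of absolute value at most $1$; in the one remaining branch it equals $\frac{r}{k} - \frac{i}{k}$ when $x_i \notin R$ and $\frac{i+r}{k} - 1$ when $x_i \in R$, and since $0 \le r \le k \le i$ both of these have absolute value at most $i/k$. Thus $|Z^R_i - Z^R_{i-1}| \le i/k$ always. For the variance I would condition additionally on the indicator $\mathbf{1}[x_i \in R]$: since the conditional mean is $Z^R_{i-1}$ in both cases, the law of total variance leaves only the two within-case variances to bound. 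When $x_i \notin R$, $Z^R_i - Z^R_{i-1}$ takes the value $\frac{r}{k} - \frac{i}{k}$ with probability $r/i$ and the value $\frac{r}{k}$ otherwise, so its variance is $\bigl(\frac{i}{k}\bigr)^2 \cdot \frac{r}{i}\bigl(1 - \frac{r}{i}\bigr) = \frac{r(i-r)}{k^2} \le \frac{ki}{k^2} = \frac{i}{k}$; when $x_i \in R$ the symmetric computation gives variance $\frac{(k-r)(i-k+r)}{k^2} \le \frac{i}{k}$. Hence $\Var(Z^R_i \mid Z^R_0,\dots,Z^R_{i-1}) \le i/k$.

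The step I expect to be the main obstacle is the eviction bookkeeping. Unlike Bernoulli sampling, in reservoir sampling the composition of the sample with respect to $R$ can change even when $x_i \notin R$ (an $R$-element may be evicted), so one must track $r = |R \cap S_{i-1}|$ and weigh the three outcomes by the right probabilities to make the telescoping identities $\E{B^R_i \mid \mathcal{F}_{i-1}, x_i \notin R} = B^R_{i-1}$ and $\E{B^R_i \mid \mathcal{F}_{i-1}, x_i \in R} = B^R_{i-1} + 1$ come out exactly; everything else is routine arithmetic. The reason for also recording the variance, rather than invoking Azuma with the crude bound $M = i/k$, is that $B^R_i$ carries the growing prefactor $i/k$, so the large differences occur only with probability $O(1/i)$; only the variance-sensitive Lemma~\ref{lem:martingale-with-var} turns this into a useful tail bound when the claim is fed into the reservoir part of Lemma~\ref{lem:main_step_upper_bound} (where, conveniently, the reservoir always holds exactly $k$ elements, so no sample-size concentration step is needed).
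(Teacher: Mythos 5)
Your proof is correct and follows essentially the same route as the paper: split on whether $x_i \in R$, enumerate the three reservoir outcomes, verify the telescoping expectations, and read off the bounded differences and the per-round variance. One cosmetic slip in the bounded-differences paragraph: for $x_i \in R$ the increment $r/k - 1$ actually arises from the branches ``no insertion'' and ``insertion overwriting an $R$-slot'', while $(i+r)/k - 1$ arises from ``insertion overwriting a non-$R$ slot''---you have these two eviction labels swapped relative to your parenthetical---but the set of values you list and the probability $(k-r)/i$ you feed into the $x_i \in R$ variance are both correct, so the conclusions are unaffected.
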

\begin{proof}
We follow the same kind of analysis as in Claim \ref{claim:uniform-martingale}. 
Fix $i > k$ (for $i \leq k$ the claim holds trivially), and suppose that the 
first $i-1$ rounds have ended, so $Z^R_0, \ldots, Z^R_{i-1}$ are already fixed. 
Denote the next element that the adversary submits by $x_i$. 
First, it is easy to verify that
$$
A^R_i = \left\{ \begin{array}{cl} A^R_{i-1} & x_i \notin R \\ A^{R}_{i-1} + 1 & 
x_i \in R
\end{array}
\right. 
$$
The calculation of $B^R_i$ requires a more subtle case analysis. Given $B^R_0, 
\ldots, B^R_{i-1}$ and $x_i$, the value of $B^R_i$ is determined by three 
factors: (i) is $x_i \in R$ or not? (ii) is $x_i$ sampled or not? and (iii) 
conditioning on $x_i$ being sampled, does it replace an element from $R$ in the 
sample, or an element not in $R$? We separate the analysis into several cases; 
in cases where $x_i$ is sampled, we denote the element removed from the sample 
to make room for $x_i$ by $r_i$.

\paragraph{Case 1: $x_i \notin R$.} In the cases where $x_i$ is either not 
	sampled, or sampled but with $r_i \notin R$, elements from $R$ are neither added nor removed from the sample. That is, $R \cap S_i = R \cap S_{i-1}$. Hence,
	$$B^R_i = \frac{i}{k}\cdot |R \cap S_i| = \frac{i-1}{k} \cdot |R \cap S_{i-1}| + \frac{1}{k} \cdot |R \cap S_{i-1}| = B^R_{i-1} + d_R(S_{i-1}),$$
	where the first equality is by definition, and the third equality follows again by definition and since $|S_{i-1}| = k$ for $i > k$.
	
	It remains to consider the event where $x_i$ is sampled and $r_i 
	\in R$. The probability that $x_i$ is sampled equals $k / i$, and conditioning on this occurring, the probability that $r_i$ belongs to $R$ is $d_R(S_{i-1})$, so the above event holds with probability $(k/i) \cdot d_R(S_{i-1})$. In this case, one element from $R$ is removed from the sample, that is, $|R \cap S_i| = |R \cap S_{i-1}| - 1$, and therefore 
	$$B^R_i = \frac{i}{k} \cdot |R \cap S_i| = \frac{i}{k} \cdot |R \cap S_{i-1}| - \frac{i}{k} = B^R_{i-1} + d_R(S_{i-1}) - \frac{i}{k}.$$ 
	Thus, conditioned on $x_i \notin R$, the expectation of $B^R_i$ is  
	\begin{align*}
	 \left(1 - \frac{k}{i} \cdot d_R(S_{i-1})\right) \cdot \left(B^R_{i-1} + 
	d_R(S_{i-1})\right) + 
	 \frac{k}{i} \cdot d_R(S_{i-1}) \cdot \left(B^R_{i-1} 
	+ d_R(S_{i-1}) - \frac{i}{k}\right) 
	 = B^R_{i-1}.
	\end{align*} 
	Since $A^R_i = A^R_{i-1}$ when $x_i \notin R$, we deduce that
	$$
	E[Z^R_i | Z^R_0, \ldots, Z^R_{i-1}\ ;\ x_i \notin R] = Z^R_{i-1}.
	$$ 
	
	\paragraph{Case 2: $x_i \in R$.} Similarly, whenever $S_i = S_{i-1}$ we 
	have 
	that $B^R_i = B^R_{i-1} + d_R(S_{i-1})$. The only case where this does not 
	hold is when $x_i$ is sampled and $r_i \notin R$, which has probability 
	$(k/i) \cdot (1-d_R(S_{i-1}))$. In this case, $|R \cap S_i| = |R \cap 
	S_{i-1}| + 1$, implying that $$B^R_i = \frac{i}{k} \cdot |R \cap S_i| = \frac{i}{k} \cdot |R \cap S_{i-1}| + \frac{i}{k} = 
	B^R_{i-1} + d_R(S_{i-1}) + \frac{i}{k}.$$ 
	Combining these two we get, conditioned on 
	$x_i \in R$, that the expectation of $B^R_i$ is
		\begin{align*}
 	B^R_{i-1} + d_R(S_{i-1}) + \left( \frac{k}{i} \cdot \left(1 - 
 	d_R(S_{i-1})\right)\right) \cdot \frac{i}{k} = B^R_{i-1}+1.
	\end{align*} 
		Finally, since $A^R_i = A^R_{i-1} + 1$ when $x_i \in R$, we have that
		$$
		E[Z^R_i | Z^R_0, \ldots, Z^R_{i-1}\ ;\ x_i \in R] = Z^R_{i-1}.
		$$
The analysis of these two cases implies that $(Z^R_0, \ldots, Z^R_n)$ is indeed 
a martingale.

It remains to obtain the bounds on the difference $|Z^R_i - Z^R_{i-1}|$ and the 
variance of $Z^R_i$ given $Z^R_0, \ldots, Z^R_{i-1}$. This follows rather 
easily as a byproduct of the above analysis (and the fact that the density 
$d_R$ is always bounded between zero and one). When $x_i \notin R$, we know 
from the analysis that $A^R_i = A^R_{i-1}$ and $B^R_{i-1} - i/k \leq B^R_i \leq 
B^R_{i-1} + 1$, whereas if $x_i \in R$, we have $A^R_i = A^R_{i-1} + 1$ and 
$B^R_{i-1} \leq B^R_i \leq B^R_{i-1} + 1 + i/k$. In both cases, we conclude 
that $|Z^R_i - Z^R_{i-1}| \leq i/k$. 

We next bound the variance of $Z^R_i$ conditioned on the values of $Z^R_0, \ldots, 
Z^R_{i-1}$ (the analysis also implicitly conditions on the value 
$d_R(S_{i-1})$; the bound we shall eventually derive holds regardless of this 
value). We start with the case that $x_i \notin R$, and revisit Case 1 above: 
with probability $(k / i) \cdot d_R(S_{i-1})$, the value of $Z^R_i$ is smaller 
than its expectation by $i/k - d_R(S_{i-1})$; and otherwise (with probability 
$1 - (k / i) \cdot d_R(S_{i-1})$), the value of $Z^R_i$ is larger than its 
expectation by $d_R(S_{i-1})$. 
Thus, we have that
\begin{align*}
\Var&(Z^R_i\ |\ Z^R_0, \ldots, Z^R_{i-1},\  x_i \notin R,\  d_R(S_{i-1})) \\
& = 
\frac{k}{i} \cdot d_R(S_{i-1}) \cdot \left(\frac{i}{k} - 
d_R(S_{i-1})\right)^2 + \left( 1 - \frac{k}{i} \cdot d_R(S_{i-1}) \right) 
\cdot \left(d_R(S_{i-1})\right)^2 \\
& =\frac{i}{k} \cdot d_R(S_{i-1}) - \left(d_R(S_{i-1})\right)^2 \leq 
\frac{i}{k}.
\end{align*}
We next address the case where $x_i \in R$, which correspond to Case 2 above. 
Here, with probability $(k/i) \cdot (1-d_R(S_{i-1}))$, the value of $Z^R_i$ is larger than its conditional expectation by $i/k + d_R(S_{i-1})- 1$; otherwise, $Z^R_i$ is smaller than the expectation by $1 - d_R(S_{i-1})$. Thus,
\begin{align*}
 \Var&(Z^R_i\ |\ Z^R_0, \ldots, Z^R_{i-1},\  x_i \in R,\  d_R(S_{i-1})) \\
& = \frac{k}{i} \cdot \left(1 - d_R(S_{i-1})\right) \cdot \left(\frac{i}{k} + 
d_R(S_{i-1}) - 1\right)^2 + 
 \left( 1 - \frac{k}{i} \cdot \left(1 - d_R(S_{i-1})\right) \right) \cdot 
\left(1 - d_R(S_{i-1})\right)^2 \\
&= \frac{i}{k} \cdot \left(1 - d_R(S_{i-1})\right) - \left(1 - d_R(S_{i-1})\right)^2 \leq \frac{i}{k}.
\end{align*}
 As the conditional variance is always bounded by $i/k$, the bound remains intact if we remove the conditioning on the value of $d_R(S_{i-1})$ and the predicate assessing whether $x_i \in R$ or not. In other words, $\Var(Z^R_i | Z^R_0, \ldots, Z^R_{i-1}) \leq i/k$, completing the proof.
\end{proof}

The proof of the second part of Lemma \ref{lem:main_step_upper_bound} now 
follows from the last claim.

\begin{proof}[Proof of Lemma \ref{lem:main_step_upper_bound}, reservoir 
sampling case]
Observe that
\begin{align*}
\Pr(|d_R(X) - d_R(S)| \geq \varepsilon) &= \Pr(|B^R_n - A^R_n| \geq \varepsilon  
n) \\ &= \Pr(|Z^R_n - Z^R_0| \geq \varepsilon n).
\end{align*}
In view of Claim \ref{claim:reservoir-martingale}, we apply Lemma 
\ref{lem:martingale-with-var} on the martingale $Z^R = (Z^R_0, \ldots, Z^R_n)$ with 
$\lambda = \varepsilon n$, $\sigma^2_i = i / k$ for any $i \geq k$ (for $i \leq 
k$, we can set $\sigma^2_i = 0$), and $M = n/k$. We get that
\begin{align*}
\label{eq:prob_bound_reservoir}
\Pr(|Z^R_n - Z^R_0| \geq \lambda) &\leq 2\exp \left( - \frac{\lambda^2}{2 
\sum_{i=1}^n \sigma_i^2 + M\lambda/3}  \right) \\
& = 2\exp \left( - \frac{\varepsilon^2 n^2}{2 \sum_{i=1}^n (i/k) + (n/k) \cdot 
\varepsilon n /3}  \right) \\
&= 2\exp \left( - \frac{\varepsilon^2 k n^2}{n(n+1) + \varepsilon n^2 /3}  
\right) \hspace{0.13cm} \\
& \leq 2 \exp\left(-\frac{\varepsilon^2 k n^2}{2n^2}\right) = 2 \exp\left(-\frac{\varepsilon^2 k}{2}\right),
\end{align*}
where the second inequality holds for $n \geq 2$.
Therefore, it suffices to require $k \geq \frac{2}{\varepsilon^2} \ln\left(\frac{2}{\delta}\right)$ 
to get the bound $\Pr(|d_R(X) - d_R(S)| \geq \varepsilon) \leq \delta$.
\end{proof}

\section{An Adaptive Attack on Sampling}
\label{sec:attack}
In this section, we present our lower bounds. Specifically, we show that the 
sample size cannot depend solely on the VC-dimension, but rather that the 
dependency on the cardinality is necessary. This is done by describing a set 
system $(U,\mathcal{R})$ with large $|U|$ and VC-dimension of one, together with a 
strategy for the adversary that will make the sampled set 
unrepresentative with respect to $(U,\mathcal{R})$.
That is, the sampled set will not be an $\varepsilon$-approximation of 
$(U,\mathcal{R})$ with high 
probability. This is in contrast to the static setting where the same sample size 
suffices to an $\varepsilon$-approximation with high probability.
Moreover, in 
the case of the $\UniformSample$ algorithm, the sampled set under attack is extremely 
unrepresentative, consisting precisely of the $k$ smallest elements in the stream (where $k$ is the total sample size at the end of the stream).
\begin{proof}[Proof of \Cref{thm:attack}]
Set the universe to be the well-ordered set $U=\{1,2,\ldots,N\}$ for an 
arbitrary $n^{6\ln n} \leq N \leq 2^{n/2}$ and 
let  $\mathcal{R}=\{[1,b]:b \in U\}$. Clearly, 
$(U,\mathcal{R})$ has VC-dimension 1.
$\Adversary$'s strategy (for both sampling algorithms $\UniformSample$ and 
$\ReservoirSample$) is described in 
\Cref{fig:attack2}.
\protocol
{The adversarial strategy}
{The description of $Adversary$'s strategy for making the sample 
unrepresentative.}
{fig:attack2}
{
\begin{enumerate}
	\item Set $a_1=1$ and $b_1=N$.
	\item Let $p'=\max\{p,\ln n/n\}$.
	\item For $i=1\ldots n$ do:
	\begin{enumerate}
		\item Set $x_i=\lfloor a_i + (1-p')(b_i-a_i) \rfloor$.
		\item If $x_i$ is sampled then set $a_{i+1}=x_i$ and $b_{i+1}=b_i$.
		\item Otherwise set $a_{i+1}=a_i$ and $b_{i+1}=x_i$.
	\end{enumerate}
	\item The final stream is $X=x_1,\ldots,x_n$.
\end{enumerate}
}

Let $S$ denote the subsequence of elements sampled by the algorithm 
$\UniformSample$ along the stream. The expected size of $S$ is $np \leq np'$, 
and it follows from the well-known Markov inequality (see e.g. \cite{AlonS92}, 
Appendix A) that $\Pr(|S| \geq 2np') < 1/2$ (in fact the probability is much 
smaller, by Chernoff inequality, but we will not need the stronger bound). From 
here on, we condition on the complementary event: we assume that $|S| < 2np'$.
The next claim asserts that for $S$ of this size, Adversary's strategy does not 
fail, in the sense that it never runs out of elements (i.e., $a_i < b_i$ for 
all $i \in [n]$).
\begin{claim}\label{clm:big-range}
If $|S| < 2np'$ then $b_i - a_i \geq n$ for any $i \in [n]$.
\end{claim}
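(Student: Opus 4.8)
The plan is to track the length $L_i := b_i - a_i$ of $\Adversary$'s working range and show it never drops below $n$ (this strengthens the informal statement that the strategy never runs out of elements: a trivial induction already gives $1 \le a_i \le x_i \le b_i \le N$, so $x_i \in U$ and the strategy is always well defined, but for it not to collapse one needs $L_i$ to stay positive). For the one-round behaviour of $L_i$: since $a_j$ is an integer, $x_j = \lfloor a_j + (1-p')L_j \rfloor$ satisfies $x_j - a_j = \lfloor (1-p') L_j \rfloor$, so if $x_j$ is sampled then $L_{j+1} = b_j - x_j = L_j - \lfloor (1-p')L_j\rfloor \ge p' L_j$, and if $x_j$ is not sampled then $L_{j+1} = x_j - a_j = \lfloor (1-p')L_j \rfloor \ge (1-p')L_j - 1$. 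Writing $f_j := p'$ if the $j$-th element was sampled and $f_j := 1-p'$ otherwise (so $0 \le f_j \le 1$), and $\gamma_j := 0$ resp.\ $1$ in these two cases, this reads $L_{j+1} \ge f_j L_j - \gamma_j$.

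Unrolling this recursion over rounds $1, \dots, i-1$ and using $L_1 = N-1$,
\[ L_i \;\ge\; (N-1)\prod_{j=1}^{i-1} f_j \;-\; \sum_{j=1}^{i-1} \gamma_j \prod_{\ell=j+1}^{i-1} f_\ell \;\ge\; (N-1)\prod_{j=1}^{i-1} f_j \;-\; n, \]
since each of the fewer than $n$ summands is at most $1$ (all $f_\ell$ and $\gamma_j$ are at most $1$). Let $s$ be the number of sampled rounds among $1, \dots, i-1$, so $\prod_{j=1}^{i-1} f_j = (p')^{s}(1-p')^{(i-1)-s}$; we are conditioning on $s \le |S| < 2np'$, and $(i-1)-s \le n$. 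Using $(p')^{s} \ge (p')^{2np'}$ and $(1-p')^{n} \ge e^{-2np'}$ (valid since $p' \le 1/2$ in the regime under consideration, as $p < c\ln|\mathcal{R}|/(n\ln n) < c/\ln n$ with $\ln|\mathcal{R}| = \ln N < n$, and $\ln n/n \le 1/2$), we obtain
\[ L_i \;\ge\; (N-1)\,(p')^{2np'}(1-p')^{n} - n \;\ge\; (N-1)\exp\!\bigl(-2np'\ln(e/p')\bigr) - n. \]

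It then remains to show $(N-1)\exp(-2np'\ln(e/p')) \ge 2n$, which yields $L_i \ge n$. Since $p' \ge \ln n/n$ always, $\ln(e/p') \le 1 + \ln(n/\ln n) \le 2\ln n$. For the factor $np'$ I would split into cases: if $p' = p$, then the hypothesis on $p$ gives $np' \le c\ln N/\ln n$; if $p' = \ln n / n$, then $np' = \ln n$. In either case $2np'\ln(e/p') \le \max\{4c\ln N,\ 4(\ln n)^2\}$, and since $N \ge n^{6\ln n}$ gives $\ln N \ge 6(\ln n)^2$, this is at most $\ln N - \ln(4n)$ once $c$ is a small enough absolute constant and $n$ is large. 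Hence $\exp(-2np'\ln(e/p')) \ge 4n/N$ and $(N-1)\exp(-2np'\ln(e/p')) \ge 4n(1-1/N) \ge 2n$.

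I expect the only genuinely delicate point to be this last estimate: the claim is simply false when $p$ is not small — if $p$ is bounded away from $0$ the working range collapses geometrically, far below $n$ — so the argument must use both ambient hypotheses, $p < c\ln|\mathcal{R}|/(n\ln n)$ and $N \ge n^{6\ln n}$. It amounts to balancing the geometric shrinkage factor $e^{-\Theta(np'\ln(1/p'))}$ accumulated over the stream against the enormous initial length $N-1$; the floor bookkeeping, the unrolling, and the case split on $p'$ are all routine once this balance is set up.
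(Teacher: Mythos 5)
Your proof is correct and follows essentially the same route as the paper's: unroll the one-step multiplicative recursion for the working-range length $b_i - a_i$, use the conditioning $|S| < 2np'$ to bound the product of shrinkage factors, and balance the resulting geometric decay against the initial length $N-1$. The only difference is in bookkeeping the $O(1)$ integer-truncation error per step: the paper folds it into a slightly worse per-step factor $(1-2p')$ by appealing to the induction hypothesis $\ell_j \geq n$ (which guarantees $p'\ell_j \geq 2$), whereas you carry the additive error through the unrolling, bound its cumulative contribution by $n$, and require the leading term to exceed $2n$ rather than $n$ — a minor and equally valid variant of the same argument.
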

\begin{proof}
For any $i \in [n]$, set $\ell_i = b_i-a_i$. We prove by induction that $\ell_i \geq n$.
If $x_i$ is sampled, then we have that 
$\ell_{i+1} \geq p'\ell_i$ and otherwise we have that $\ell_{i+1}=(1-p')\ell_i -2  \geq   (1-2p') \ell_i$, where the inequality follows from the induction assumption. Since $|S| < 2np'$, we get that 
\begin{align*}
\ell_i &\ge\nonumber p'^{|S|}(1-2p')^{n-|S|} \cdot N \nonumber \\ 
& \ge p'^{|S|}(1-2p')^n \cdot N \nonumber \\
&  = e^{-(|S|\ln \frac{1}{p'} + n\ln(\frac{1}{1-2p'}))} \cdot N \nonumber \\ 
& > e^{-(|S|\ln \frac{1}{p'} + 3np')} \cdot N  \\
& > e^{-(2np'\ln \frac{1}{p'} + 3np')} \cdot N \\ 
& \ge e^{\ln n-\ln N} \cdot N  = n~,
\end{align*}
where the third inequality holds since $3p \ge 
\ln(\frac{1}{1-2p})$ for small 
enough $p > 0$, and the last inequality follows since $p' \leq \frac{\ln N}{6n\ln 
n}$ and $p' \ge \ln n/n$, which means that
$$
\ln N \ge 6np'\ln n \ge 2np'\ln (1/p') + 3np' + \ln n.
$$
This proves the induction step, and completes the proof of the claim.
\end{proof}
The last claim means that if $|S| < 2np'$, then the attack in 
\Cref{fig:attack2} successfully generates a stream of $n$ elements.
We now show that the sampled 
set is not an $\varepsilon$-approximation. 
We begin by analyzing  the $\UniformSample$ algorithm.
\begin{claim}
Consider $\Adversary$'s attack on $\UniformSample$ described in \Cref{fig:attack2}. At round $i$ of the game, 
\begin{itemize}
	\item All elements that were previously submitted by $\Adversary$ and \emph{sampled} are no bigger than $a_i$.
	\item All elements that were previously submitted but \emph{not sampled} are no smaller than $b_i$.
	\item The element submitted during round $i$ is between $a_i$ and $b_i$.
\end{itemize}
\end{claim}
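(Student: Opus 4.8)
I would establish all three bullets simultaneously by induction on the round number $i$, recalling that we are working under the conditioning $|S| < 2np'$, so that \Cref{clm:big-range} applies and gives $b_i - a_i \geq n$ for every $i \in [n]$. The base case $i=1$ is immediate: no element has been submitted before round $1$, so the first two bullets are vacuous, and since $a_1 = 1$ is an integer with $0 < 1-p' < 1$ and $b_1 > a_1$, the definition $x_1 = \lfloor a_1 + (1-p')(b_1-a_1)\rfloor$ yields $a_1 \leq x_1 < b_1$.

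For the inductive step I would assume the three bullets hold at round $i$ — in particular $a_i \le x_i \le b_i$ — and split according to the update rule of \Cref{fig:attack2}. If $x_i$ is sampled, then $a_{i+1} = x_i$ and $b_{i+1} = b_i$: every earlier sampled element is $\le a_i \le x_i = a_{i+1}$ and $x_i = a_{i+1}$ itself, while every earlier non-sampled element is $\ge b_i = b_{i+1}$ and $x_i$ is sampled, so the first two bullets hold at round $i+1$. If $x_i$ is not sampled, then $a_{i+1} = a_i$ and $b_{i+1} = x_i$: every earlier sampled element is $\le a_i = a_{i+1}$ and $x_i$ is not sampled, while every earlier non-sampled element is $\ge b_i \ge x_i = b_{i+1}$ and $x_i = b_{i+1}$, again giving the first two bullets. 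The third bullet at round $i+1$ then follows exactly as in the base case, provided the new interval is nondegenerate, i.e.\ $b_{i+1} > a_{i+1}$: since $a_{i+1}$ is an integer and $0 < 1-p' < 1$, the definition $x_{i+1} = \lfloor a_{i+1} + (1-p')(b_{i+1}-a_{i+1})\rfloor$ then gives $a_{i+1} \le x_{i+1} < b_{i+1}$.

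The one point needing genuine care — and the only real obstacle — is verifying $b_{i+1} > a_{i+1}$, i.e.\ that the working interval never collapses. In the sampled case this is $b_i > x_i$, which holds because $x_i \le a_i + (1-p')(b_i-a_i) < a_i + (b_i - a_i) = b_i$ using $p' > 0$ and $b_i > a_i$. In the non-sampled case this is $x_i > a_i$, which is where \Cref{clm:big-range} enters: $(1-p')(b_i-a_i) \ge \tfrac12(b_i-a_i) \ge \tfrac{n}{2} \ge 1$ for $n \ge 2$, where $1 - p' \ge \tfrac12$ follows from $p' \le \tfrac{\ln N}{6n\ln n} \le \tfrac12$ (the latter since $N \le 2^{n/2}$), and hence $x_i = \lfloor a_i + (1-p')(b_i-a_i)\rfloor \ge a_i + 1 > a_i$. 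This closes the induction; everything else is elementary bookkeeping with the floor function.
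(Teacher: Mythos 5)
Your proof is correct and takes essentially the same route as the paper: induction on the round, case-split on whether $x_i$ is sampled, with \Cref{clm:big-range} supplying the non-degeneracy $b_i > a_i$. You are somewhat more explicit about the floor and about why $p' \le 1/2$, but note that once \Cref{clm:big-range} is in hand you can also just read off $b_{i+1} - a_{i+1} \ge n > 0$ directly (for $i+1 \le n$) rather than re-deriving it from the update rule, which is what the paper does.
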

\begin{proof}
By induction, where the base case $i=1$ is trivial. Suppose that the claim 
holds for the first $i-1$ rounds; we now prove it for round $i$. By definition 
of the attack, and from Claim~\ref{clm:big-range} it holds that $a_{i-1} \leq 
a_i < 
b_i 
\leq b_{i-1}$ and so any of the elements 
$x_j$ for $j < i-1$ satisfies the desired condition, by the induction 
assumption. It remains to address the case where $j = i-1$. If $x_{i-1}$ was 
sampled, then the attack sets $a_i = x_{i-1}$, that is, $x_{i-1}$ is a sampled 
element and satisfies $x_{i-1} \leq a_i$. Otherwise, the attack sets $b_i = 
x_{i-1}$ and so $x_{i-1}$ is a non-sampled element and satisfies $x_{i-1} \ge 
b_i$. Finally, $a_i < x_i < b_i$ always holds. Thus, the three desired conditions are retained.
\end{proof}

%
%
%
As the last claim depicts, all \emph{sampled} elements are smaller than all 
\emph{non-sampled} ones at any point along the stream.
This, of course, suffices for the sampled set to not be an 
$\varepsilon$-approximation of $(U,\mathcal{R})$. Denote the sampled set by $S$, 
and let $s$ be the maximal element in $S$ (if $S$ is empty, we are done). Consider now the range 
$[1,s] \in 
\mathcal{R}$: its density in the sampled set is $1$, namely, 
$d_{[1,s]}(S)=1$, while its density in the stream is $d_{[1,s]}(X)=|S|/n$. 
To summarize,
$$
|d_{[1,s]}(S) - d_{[1,s]}(X)| \ge
1-|S|/n \ge
1-2p' > 
1/2 \ge \varepsilon~.
$$
Altogether, the attack does not fail provided that $|S| < 2np'$, which holds with probability at least $1/2$. Thus, $\UniformSample$ with parameter $p$ as in the  theorem's statement is not $(\varepsilon, \delta)$-robust.

The analysis of the $\ReservoirSample$ algorithm is very similar. Recall that 
$k$ denotes the sample size, and let $k'$ be the total number of elements that were 
sampled 
during the reservoir sampling process. That is, $k'$ counts sampled elements 
that were evicted at a future iteration. We bound $k'$ as follows. $\E{k'}=k 
+ \sum_{i =1}^{n}k/n \le 2k\ln n$. Again, Markov inequality shows that with 
probability at least $1/2$, we will have $k' \le 4k\ln n$. Using the previous analysis, we 
know that all $k'$ elements are the smallest elements in the stream. The sample 
set $S$ consists of some $k$ elements among these $k'$ elements (in other 
words, the sample set is not necessarily the set of $k$ smallest element, but 
it is still a subset of the $k'$ smallest elements). Thus, taking the 
interval $[1,s]$ where $s$ is the maximal element among the $k'$ elements, we 
have that the density of $[1,s]$ in the sample is $ d_{[1,s]}(S) = 
\frac{k}{k}=1$.
On the 
other hand, the density of $[1,s]$ is the stream is
$$
d_{[1,s]}(X) = \frac{k'}{n} \le \frac{4k\ln n}{n} \leq \frac{\ln N}{n} \leq 1/2. 
$$
Together, we entail that
$$
|d_{[1,s]}(S)-d_{[1,s]}(X)| > 1-1/2 \geq \varepsilon~,
$$ meaning that $\ReservoirSample$ with $k$ as in the statement of the theorem is not $(\varepsilon, \delta)$-robust.
\end{proof}


\section{Continuous Robustness}
\label{sec:continuous}

In this section, we prove that the $\ReservoirSample$ algorithm is 
$(\varepsilon, \delta)$-{\em continuous} robust against static and adaptive 
adversaries. Recall that a sampling algorithm is $(\varepsilon, 
\delta)$-continuously robust if the following holds with probability at least 
$1- 
\delta$: at \emph{any} point throughout the stream, the current sample held by 
$\Sampler$ is an $\varepsilon$-approximation of the current stream (i.e., of the set of all elements submitted by $\Adversary$ until now).

With this definition in hand, $\UniformSample$ cannot possibly be continuously 
robust in general (even in the static setting)\footnote{To see this, consider 
any set system $(U, \mathcal{R})$ where $\mathcal{R}$ contains a singleton 
$\{u\}$ for some $u \in U$, which is the first element of the stream. With 
probability $1-p$ this element is not sampled and the density of $\{u\}$ in 
the sample at the current point is $0$, while its density in the stream is $1$. 
This violates the $\varepsilon$-approximation requirement (unless $p \geq 
1-\delta$).}.
We thus restrict our discussion to $\ReservoirSample$ from here on, and turn to 
the proof of Theorem \ref{thm:continuous}. The proof examines 
$O(\varepsilon^{-1} \ln n)$ carefully picked points along the stream, applying 
Theorem \ref{thm:UB} on each of the points. It then shows that if the sample  
is a good approximation of the stream at all of these points, then continuous 
robustness is guaranteed with high probability.
 
\begin{proof}[Proof of Theorem \ref{thm:continuous}]
We provide the proof for the setting of an \emph{adaptive} adversary. The proof for the static setting is essentially identical, with the only difference being that, instead of making black-box applications of Theorem \ref{thm:UB}, we apply the static analogue of it; Recall that the bound in the static analogue is of the form $\Theta\left(\frac{d + \ln 1/\delta}{\varepsilon^2}\right)$, compared to the $\Theta\left(\frac{\ln|\mathcal{R}| + \ln 1/\delta}{\varepsilon^2}\right)$ bound appearing in the statement of Theorem \ref{thm:UB}.
	
Let $(U, \mathcal{R})$, $n$, $\varepsilon$, $\delta$ be as in the statement of 
the theorem. 
As a warmup, let us analyze a simple yet non-optimal proof based on a na\"ive 
union bound.
Denote the stream and sample after $i$ rounds by $X_i$ and $S_i$, respectively.
Consider for a moment the first $i$ rounds of the game as a ``standalone'' game where the stream length is $i$.
Applying the second part of Theorem \ref{thm:UB} with parameters $(U, \mathcal{R}), \varepsilon, \delta', i$, where $\delta' = \delta/n$, we get that if the memory size $k$ of $\ReservoirSample$ satisfies 
\begin{equation}
\label{eq:naive_bound_k}
k \geq 2 \cdot \frac{\ln |\mathcal{R}| + \ln(2n / \delta)}{\varepsilon^2} = 
2 \cdot \frac{\ln |\mathcal{R}| + \ln(2 / \delta) + \ln n}{\varepsilon^2},
\end{equation}
then regardless of $\Adversary$'s strategy, 
$$
\Pr(\text{$S_i$ is not an $\varepsilon$-approximation of $X_i$}) \leq \delta/n. 
$$
Taking a union bound, the probability that $S_i$ is an $\varepsilon$-approximation of $X_i$ for all $i \in [n]$ is at least $1 - n \cdot (\delta / n) = 1 - \delta$. Thus, it follows that $\ReservoirSample$ whose parameter $k$ satisfies the condition of \eqref{eq:naive_bound_k} is $(\varepsilon, \delta)$-continuously robust.

We now continue to the proof of the improved bound, appearing in the statement of the theorem. 
The proof is also based, at its core, on a union bound argument, albeit a more efficient one. The key idea is to take a sparse set of ``checkpoints'' $i_1, \ldots, i_t$ along the stream, where $i_{j+1} = (1+\Theta(\varepsilon)) i_j$, apply Theorem \ref{thm:UB} at any of the times $i_1, \ldots, i_t$ to make sure the sample is an $(\varepsilon/2)$-approximation of the stream in any of these times. Finally, we show that with high probability, for any $j \in [t-1]$, the approximation is preserved (the approximation factor might become slightly worse, but no worse than $\varepsilon$) in the ``gaps'' between any couple of such neighboring points.

For this, we first need the following simple claims.
\begin{claim}
	\label{claim:approx_of_approx}
Let $T, T'$ be two sequences of length $k$ over $U$, which differ in up to $v$ values. Then $|d_R(T) - d_R(T')| \leq v/k$ for any $R \subseteq U$. 
In particular, if $T$ is an $\alpha$-approximation of some sequence $X \supseteq T, T'$, then $T'$ is an $(\alpha + v/k)$-approximation of $X$.
\end{claim}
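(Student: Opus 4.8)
The plan is to reduce both parts of the claim to elementary counting, using only the definition of density together with the triangle inequality. First I would fix an arbitrary $R \subseteq U$ and recall that, by definition, $d_R(T) = |R \cap T| / k$ and $d_R(T') = |R \cap T'| / k$, where $R \cap T$ denotes the subsequence of $T$ consisting of the elements lying in $R$ (and similarly for $T'$). Since both $T$ and $T'$ have length $k$, the whole claim is about comparing the two integer counts $|R \cap T|$ and $|R \cap T'|$.

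The main step is to bound $\big| |R \cap T| - |R \cap T'| \big|$. I would index both sequences by $[k]$ and let $I \subseteq [k]$ be the set of positions on which $T$ and $T'$ disagree, so that $|I| \leq v$ by hypothesis. On every position outside $I$ the two sequences contribute identically to their respective intersection counts, so $|R \cap T| - |R \cap T'|$ is exactly the difference between the contributions coming from positions in $I$; since each individual position contributes either $0$ or $1$ to each count, this difference is at most $|I| \leq v$ in absolute value. Dividing by $k$ yields $|d_R(T) - d_R(T')| \leq v/k$, and since $R$ was arbitrary this establishes the first assertion.

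For the ``in particular'' part, suppose $T$ is an $\alpha$-approximation of $X$, i.e.\ $|d_R(X) - d_R(T)| \leq \alpha$ for every $R \in \mathcal{R}$ (and note $T, T' \subseteq X$ so that all three densities are defined). Then by the triangle inequality, for every such $R$,
$$
|d_R(X) - d_R(T')| \leq |d_R(X) - d_R(T)| + |d_R(T) - d_R(T')| \leq \alpha + v/k,
$$
which is precisely the statement that $T'$ is an $(\alpha + v/k)$-approximation of $X$.

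There is essentially no obstacle here; the only point requiring minor care is the bookkeeping in the main step — namely, interpreting ``differ in up to $v$ values'' as differing in at most $v$ positions of the length-$k$ sequences, and observing that modifying a single position shifts each intersection count by at most one. I would state this explicitly to avoid ambiguity, after which the rest is immediate.
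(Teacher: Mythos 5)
Your proof is correct and follows essentially the same route as the paper's: bound the difference of the intersection counts by $v$, divide by $k$, and use the triangle inequality for the second part. The only difference is that you spell out the position-by-position bookkeeping that the paper leaves implicit, which is a fine (and harmless) addition.
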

\begin{proof} 
	For any subset $R \subseteq U$ we have $-v \leq |R \cap T| - |R \cap T'| \leq v$. Dividing by $k = |T| = |T'|$, and recalling that $d_R(T) = |R \cap T| / |T|$ and $d_R(T') = |R \cap T'| / |T'|$, we conclude that $-v/k \leq d_R(T) - d_R(T') \leq v/k$, that is, $|d_R(T) - d_R(T')| \leq v/k$. To prove the second part, note that $$|d_R(T') - d_R(X)| \leq |d_R(T') - d_R(T)| + |d_R(T) - d_R(X)| \leq v/k + \alpha$$ 
	for any $R \subseteq U$.
\end{proof}
\begin{claim}
	\label{claim:approx_larger_set}
Suppose that $T \subseteq X \subseteq X'$ are three sequences over $U$, where $T$ is an $\alpha$-approximation of $X$, and $|X'| \leq (1+\beta) |X|$. Then $T$ is an $(\alpha+\beta)$-approximation of $X'$.
\end{claim}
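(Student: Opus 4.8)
The plan is to reduce the claim to bounding the discrepancy $|d_R(X) - d_R(X')|$ between the stream $X$ and its extension $X'$, and then combine this with the hypothesis that $T$ is an $\alpha$-approximation of $X$ via the triangle inequality. Concretely, for any $R \subseteq U$ I would write $|d_R(T) - d_R(X')| \leq |d_R(T) - d_R(X)| + |d_R(X) - d_R(X')|$, where the first term is at most $\alpha$ by assumption, so it remains to show that the second term is at most $\beta$.

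To bound $|d_R(X) - d_R(X')|$, write $m = |X|$, $m' = |X'|$, $a = |R \cap X|$, and $a' = |R \cap X'|$, so that $d_R(X) = a/m$ and $d_R(X') = a'/m'$. Since $X$ is a subsequence of $X'$, every occurrence of an element of $R$ in $X$ is also one in $X'$, hence $a \leq a'$; moreover the number of additional occurrences of $R$-elements is at most the number of additional elements overall, so $a' \leq a + (m' - m)$. For the upper direction I would use $a' \leq a + (m' - m)$ together with $a/m' \leq a/m$ to get $d_R(X') - d_R(X) \leq (m' - m)/m' = 1 - m/m' \leq 1 - 1/(1+\beta) \leq \beta$. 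For the lower direction I would use $a' \geq a$ together with $a \leq m$ (so that multiplying $a$ by the non-positive quantity $(m - m')/(m m')$ only makes it smaller, bringing it down to $(m - m')/m'$), obtaining $d_R(X') - d_R(X) \geq (m - m')/m' = m/m' - 1 \geq 1/(1+\beta) - 1 \geq -\beta$. Together these give $|d_R(X) - d_R(X')| \leq \beta$, and the claim follows.

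I do not expect any genuine obstacle here, since the statement is elementary. The only mild subtlety is keeping track of which of the two fractions $a/m$, $a'/m'$ dominates in each direction, and remembering to invoke $a \leq m$ when lower-bounding; everything else is routine arithmetic, using the inequality $1/(1+\beta) \geq 1 - \beta$ (equivalently $\beta^2 \geq 0$). One could alternatively phrase the whole argument through Claim \ref{claim:approx_of_approx} by thinking of $X$ and $X'$ as agreeing on $m$ coordinates, but the direct fractional computation above is cleaner.
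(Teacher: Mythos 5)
Your proof is correct, and both the underlying inequalities ($|R\cap X| \le |R\cap X'| \le |R\cap X| + (|X'|-|X|)$, the size comparison $|X| \le |X'| \le (1+\beta)|X|$, and the fact that densities are at most one) and the conclusion match the paper's argument. The only difference is organizational: you factor out the freestanding bound $|d_R(X) - d_R(X')| \le \beta$ and then apply the triangle inequality against $|d_R(T) - d_R(X)| \le \alpha$, whereas the paper carries $\alpha$ along a single chain of inequalities and bounds $d_R(T)$ directly from both sides; the algebra underneath is the same.
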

\begin{proof}
For any subset $R \subseteq U$, we have that $|R \cap X| \leq |R \cap X'| \leq |R \cap X| + \beta |X|$. We also know that $|d_R(T) - d_R(X)| \leq \alpha$, since $T$ is an $\alpha$-approximation of $X$.
On the one hand, it follows that 
\begin{align*}
 d_R(T) & \geq d_R(X) - \alpha = \frac{|R \cap X|}{|X|} - \alpha \geq \frac{|R 
\cap X'| - \beta|X|}{|X|} - \alpha \\
& \geq \frac{|R \cap X'|}{|X'|} - \beta - \alpha = d_R(X') - (\alpha + \beta).
\end{align*}
On the other hand,
\begin{align*}
d_R(T) & \leq d_R(X) + \alpha = \frac{|R \cap X|}{|X|} + \alpha \leq \frac{|R 
\cap X'|}{|X'| / (1+\beta)} + \alpha \\
& = (1+\beta)d_R(X') + \alpha \leq d_R(X') + (\alpha + \beta). 
\end{align*}
As these inequalities hold for any $R \subseteq U$, the claim follows.
\end{proof}
As a consequence of the above two claims, we get the following useful claim. (Recall that for any $i \in [n]$, the sample and stream after $i$ rounds are denoted by $S_i$ and $X_i$, respectively.)
\begin{claim}
	\label{claim:corollary_continuous}
Consider $\ReservoirSample$ with memory size $k$, and suppose that exactly $v$ elements were sampled in rounds $l+1, l+2, \ldots, m$ of the game, where $k \leq l < m \leq (1 + \beta)l$. If $S_l$ is an $\alpha$-approximation of $X_l$, then $S_m$ is an $(\alpha + \beta + v/k)$-approximation of $X_m$.
\end{claim}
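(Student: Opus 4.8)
The plan is to derive the claim by chaining \ref{claim:approx_larger_set} and \ref{claim:approx_of_approx}; no new ideas are needed beyond bookkeeping. First I would record two structural facts about the reservoir. Since $l \geq k$ and $m > l \geq k$, the memory is already full at both times $l$ and $m$, so $S_l$ and $S_m$ are sequences of exactly $k$ elements; moreover $S_l \subseteq X_l \subseteq X_m$ and $S_m \subseteq X_m$, so both are subsequences of $X_m$, which is exactly the containment the two claims require.

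Next I would pass from an approximation of $X_l$ to an approximation of the larger stream $X_m$. By hypothesis $|X_m| = m \leq (1+\beta)l = (1+\beta)|X_l|$, and $S_l \subseteq X_l \subseteq X_m$, so applying \ref{claim:approx_larger_set} with $T = S_l$, $X = X_l$ and $X' = X_m$ shows that $S_l$ is an $(\alpha + \beta)$-approximation of $X_m$. Then I would account for the change in the reservoir's contents during rounds $l+1, \ldots, m$: each sampling event in reservoir sampling overrides exactly one memory slot, so if exactly $v$ elements are sampled in those rounds, then $S_m$ differs from $S_l$ in at most $v$ of its $k$ positions. Applying \ref{claim:approx_of_approx} with $T = S_l$, $T' = S_m$, $X = X_m$ and approximation parameter $\alpha + \beta$ then yields that $S_m$ is an $((\alpha + \beta) + v/k)$-approximation of $X_m$, which is precisely the assertion of the claim.

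I expect the argument to be an essentially mechanical substitution into the two preceding claims, so there is no real obstacle; the only point deserving a moment's care is the bound $v$ on the number of positions in which $S_l$ and $S_m$ differ. One should observe that overriding the same slot more than once, or re-inserting a previously evicted element, can only shrink this Hamming distance, so counting sampling events gives a valid upper bound of $v$ regardless of which slots are hit.
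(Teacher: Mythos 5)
Your proof is correct and follows exactly the paper's own argument: apply Claim~\ref{claim:approx_larger_set} to pass from $X_l$ to $X_m$, then apply Claim~\ref{claim:approx_of_approx} to pass from $S_l$ to $S_m$. Your extra remark justifying why $S_m$ and $S_l$ differ in at most $v$ positions is a point the paper leaves implicit, but it is a routine observation rather than a different route.
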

\begin{proof}
By Claim \ref{claim:approx_larger_set}, $S_l$ is an $(\alpha + \beta)$-approximation of $X_m$. As $S_m$ differs from $S_l$ by at most $v$ elements, we conclude from Claim \ref{claim:approx_of_approx} that $S_m$ is an $(\alpha + \beta + v/k)$-approximation of $X_m$.
\end{proof}
The last claim equips us with an approach to ensure continuous robustness, which is more efficient compared to the simple union bound approach. Suppose that there exists a set of integers $k = i_1 < i_2 < \ldots < i_t = n$ satisfying the following for any $j \in [t-1]$.
\begin{enumerate}
\item $S_{i_j}$ is an $\alpha$-approximation of $X_{i_j}$, where $\alpha = \varepsilon/4$.
\item $i_{j+1} \leq (1+\beta) i_j$, where $\beta=\varepsilon/4$.
\item  The number of elements sampled in rounds $i_j+1, i_j+2, \ldots, i_{j+1}$ is bounded by $v = \varepsilon k / 2$.
\end{enumerate}
We claim that the above three conditions suffice to ensure that $S_i$ is an $\varepsilon$-approximation of $X_i$ for any $i \in [n]$. Indeed, for $i \leq k$, $S_i = X_i$ is trivially an $\varepsilon$-approximation. When $i > k$, consider the maximum $j < t$ for which $i_j \leq i$, and apply Claim \ref{claim:corollary_continuous} with $l = i_j$, $m = i$, and $\alpha, \beta, v$ as dictated above.
Since $\alpha+\beta+v/k=\varepsilon$, the claim implies that $S_i$ is an $\varepsilon$-approximation of $X_i$, as desired.

Specifically, given $k$ satisfying the assumption of Theorem \ref{thm:continuous}, we pick $i_1, i_2, \ldots, i_t$ recursively as follows: we start with $i_1 = k$; and given $i_j$ we set $i_{j+1} \leq n$ as the largest integer satisfying that $i_{j+1} \leq (1+\beta)i_j = (1 + \varepsilon/4) i_j$. 
It is not hard to verify that $i_j = k \cdot (1+\theta(\varepsilon))^{j-1}$ 
(this implicitly relies on the fact that $k \geq 4/\varepsilon$, ensured by the 
assumption of the theorem). Note that $t = O(\ln_{1+\varepsilon} n) = 
O(\varepsilon^{-1} \ln n)$. We next show that for this choice of $i_1, \ldots, 
i_t$, the above three conditions are satisfied simultaneously for all $j \in 
[t-1]$ with probability at least $1-\delta$. This shall conclude the proof.

For the first condition, apply Theorem \ref{thm:UB}  for any $j \in [t-1]$ with parameters $(U, \mathcal{R}), \varepsilon/42, \delta', i_j$ where $\delta' = \delta / 2t$, concluding that if the memory size $k$ satisfies
\begin{equation*}
k \geq 2 \cdot \frac{\ln |\mathcal{R}| + \ln(4t / \delta)}{(\varepsilon/4)^2} = 
\Theta \left(\frac{\ln |\mathcal{R}| + \ln(1 / \delta) + \ln(1 / \varepsilon) + \ln \ln n}{\varepsilon^2} \right)
\end{equation*}
then for any $j \in [t-1]$,
\begin{equation*}
\Pr(S_{i_j} \text{ is an $\varepsilon/2$-approximation of } X_{i_j}) \geq 1 - \delta / 2t.
\end{equation*}
Taking a union bound, with probability at least $1 - \delta/2$ the first condition holds for all $j \in [t-1]$.

The second condition, regarding the boundedness of $i_{j+1}$ as a function of $i_j$, holds trivially (and deterministically) for our choice of $i_1 \leq i_2 \leq \ldots \leq i_t$. 

Finally, it remains to address the third condition.
For any $j \in [t-1]$, let $A_j$ denote the total number of sampled elements in rounds $i_j+1, i_j+2, \ldots, i_{j+1}$ of the game. Note that each such $A_j$ is a random variable.  We wish to show that 
\begin{equation}
\label{eq:num_elements_sampled}
\Pr(A_j > \varepsilon k / 2) \leq \delta / 2t.
\end{equation}
Indeed, if \eqref{eq:num_elements_sampled} is true for any $j \in [t-1]$, then 
the probability that the third condition holds for any $j$ is at least $1 - 
\delta/2$, which (in combination with our analysis of the other two conditions) 
completes the proof. Thus, it remains to prove \eqref{eq:num_elements_sampled}.

Recall that the probability of an element to be sampled in round $i$ is exactly $k/i$, and that $i_{j+1} \leq (1+\varepsilon/4)i_j$. Hence, $A_j$ is a sum of up to $\lfloor \varepsilon i_j / 4 \rfloor$ independent random variables, each of which has probability less than $k / i_j$ to be sampled. In particular, the mean of $A_j$ is less than $(\varepsilon i_j / 4) \cdot (k / i_j) = \varepsilon k / 4$. From Chernoff bound (Theorem \ref{lem:chernoff}), we get the desired bound: 
$$
\Pr(A_j \geq \varepsilon k / 2) < \exp\left( - \frac{2^2 \cdot \varepsilon k / 4}{2 + 2\cdot 2 / 3}  \right) \leq \exp \left( - \frac{\varepsilon k}{4}\right) \leq \frac{\delta}{2t},
$$ 
where the last inequality holds for $k \geq c \cdot \varepsilon^{-1}(\ln 
1/\delta + \ln 1/\varepsilon + \ln \ln n)$, for a sufficiently large constant 
$c > 0$; note that $k$ in the theorem's statement indeed satisfies this 
inequality. 
\end{proof}

\section*{Acknowledgments}
We are grateful to Moni Naor for suggesting the study of streaming 
algorithms 
in the adversarial setting and for helpful and informative discussions about 
it. We 
additionally 
thank Noga Alon, Nati Linial, and Ohad Shamir for invaluable comments and 
suggestions for the paper.

\bibliographystyle{alpha}
\bibliography{adaptive-sampling}

\end{document}